\let\ssection=\section
\renewcommand{\section}{\setcounter{equation}{0}\ssection}
\newcommand{\AdS}{\mathrm{AdS}}
\newcommand{\Aut}{\mathrm{Aut}}
\newcommand{\cB}{{\mathcal{B}}}
\newcommand{\bone}{\boldsymbol{1}}
\newcommand{\bbC}{\mathbb{C}}
\newcommand{\Conf}{\mathrm{Conf}}
\newcommand{\conf}{\mathrm{conf}}
\newcommand{\Diff}{{\mathrm{Diff}}}
\newcommand{\diag}{\mathrm{diag}}
\newcommand{\Div}{\mathrm{Div}}
\newcommand{\End}{{\mathrm{End}}}
\newcommand{\rE}{{\mathrm{E}}}
\newcommand{\cE}{{\mathcal{E}}}
\newcommand{\Ein}{{\mathrm{Ein}}}
\newcommand{\cF}{{\mathcal{F}}}
\newcommand{\rg}{\mathrm{g}}
\newcommand{\hrg}{\widehat{\rg}}
\newcommand{\trg}{\widetilde{\rg}}
\newcommand{\rh}{\mathrm{h}}
\newcommand{\Id}{\mathrm{Id}}
\newcommand{\Int}{\mathrm{Int}}
\newcommand{\sfL}{\mathsf{L}}
\newcommand{\cM}{\mathcal{M}}
\newcommand{\hM}{\widehat{M}}
\newcommand{\rO}{{\mathrm{O}}}
\newcommand{\ro}{{\mathrm{o}}}
\newcommand{\cO}{{\mathcal{O}}}
\newcommand{\bbP}{\mathbb{P}}
\newcommand{\oP}{{\overline{P}}}
\newcommand{\cQ}{\mathcal{Q}}
\newcommand{\oQ}{{\overline{Q}}}
\newcommand{\hr}{\widehat{r}}
\newcommand{\bbR}{\mathbb{R}}
\newcommand{\Ric}{\mathrm{Ric}}
\newcommand{\hs}{\widehat{s}}
\newcommand{\Sch}{\mathrm{Sch}}
\newcommand{\sch}{\mathrm{sch}}
\newcommand{\SL}{\mathrm{SL}}
\newcommand{\Sl}{\mathrm{sl}}
\newcommand{\rso}{\mathrm{so}}
\newcommand{\so}{\mathrm{o}}
\newcommand{\rT}{\mathrm{T}}
\newcommand{\Tr}{\mathrm{Tr}}
\newcommand{\wht}{{\widehat{t}}}
\newcommand{\htheta}{{\widehat{\theta}}}
\newcommand{\rU}{\mathrm{U}}
\newcommand{\Vect}{\mathrm{Vect}}
\newcommand{\Vol}{\mathrm{Vol}}
\newcommand{\hx}{{\widehat{x}}}
\newcommand{\oX}{{\overline{X}}}
\newcommand{\hxi}{\widehat{\xi}}
\newcommand{\txi}{\widetilde{\xi}}
\newcommand{\oY}{{\overline{Y}}}
\newcommand{\bbZ}{\mathbb{Z}}
\newcommand{\half}{\frac{1}{2}}
\newcommand{\medbox}[1]{\fbox{%
\rule[-10pt]{0pt}{25pt}$\;\;\displaystyle{#1}\;\;$}%
}
\begin{document}

\baselineskip=24pt

\oddsidemargin .1truein
\newtheorem{thm}{Theorem}[section]
\newtheorem{lem}[thm]{Lemma}
\newtheorem{cor}[thm]{Corollary}
\newtheorem{pro}[thm]{Proposition}
\newtheorem{ex}[thm]{Example}
\newtheorem{rmk}[thm]{Remark}
\newtheorem{defi}[thm]{Definition}

\title{\sc Schr\"odinger manifolds}

\author{
C. DUVAL\footnote{mailto: duval-at-cpt.univ-mrs.fr}
\quad\hbox{and}\quad
S. LAZZARINI\footnote{mailto: lazzarini-at-cpt.univ-mrs.fr}\\[5mm]
Centre de Physique Th\'eorique,\footnote{ 
Laboratoire affili\'e \`a la FRUMAM
}\\ [2mm]
Aix-Marseille Univ, CNRS UMR-7332,  Univ Sud Toulon-Var,\\[2mm]
F--13288 Marseille Cedex 9, France\\[2mm]
}


\date{August 13, 2012}

\maketitle

\thispagestyle{empty}

\begin{abstract}
This article propounds, in the wake of influential work of Fefferman and Graham about Poincar\'e extensions of conformal structures, a definition of a (Poincar\'e-)Schr\"odinger
manifold whose boundary is endowed with a conformal Bargmann structure above a non-relativistic Newton-Cartan spacetime.  Examples of such manifolds are worked out in terms of homo\-geneous spaces of the Schr\"odinger group in any spatial dimension, and their global topo\-logy is carefully analyzed.  These archetypes of Schr\"odinger manifolds carry a Lorentz structure together with a prefer\-red null Killing vector field; they are shown to admit the Schr\"odinger group as their maximal group  of isometries. The relationship to similar objects arising in the non-relativistic AdS/CFT correspondence is discussed and clarified.
\end{abstract}



\bigskip
\noindent
\paragraph{Keywords:} Conformal structures; Fefferman-Graham construction; Schr\"odinger group; Non-relativistic holography


\paragraph{MSC:} 53A30; 53C30; 53C50; 53C80; 81R20



\newpage

\tableofcontents

\newpage

\section{Introduction}
\label{IntroSection}

The notion of ``non-relativistic conformal symmetry'' goes back to Jacobi and Lie as highlighted in, e.g., \cite{RU,DH}. In the early seventies, Jackiw \cite{Jack}, Niederer \cite{Nie}, and Hagen \cite{Hag} rediscovered this symmetry within the quantum mecha\-nical context; the maximal kinematical symmetry group of the free Schr\"odinger equation has since then been coined the ``Schr\"odinger group''. One of the key features of this sym\-metry is a specific action of dilations according to which time is dilated twice as much as space (the \textit{dynamical exponent} is $z=2$). A remarkable relationship between the Schr\"odinger Lie algebra and the relativistic conformal Lie algebra (in suitable dimensions) was then  unveiled in \cite{BPS}. The Schr\"odinger symmetry also happened to play a central r\^ole in the physics of strongly anisotropic critical systems \cite{Hen}, and in the description of ageing phenomena \cite{Hen2,Hen3}. At a geo\-metrical level, the (center-free) Schr\"odinger group has been interpreted as the group of those ``conformal transformations'' of a Newton-Cartan spacetime that also permute its unparametrized geodesics~\cite{Duv}. It soon became patent  that an adapted framework to deal intrinsically  with non-relativistic con\-formal symmetries is provided by Bargmann structures \cite{DBKP} defined on $(\bbR,+)$- or circle-bundles over Newton-Cartan spacetimes. Let us recall that a Bargmann manifold (akin to generalized \textit{pp}-waves \cite{EK}) is such a principal fibre-bundle, endowed with a Lorentz metric, whose fundamental vector field is null and covariantly constant. This definition entailed that the conformal automorphisms of a Bargmann structure constitute a Lie group which turns out to be actually isomorphic with the Schr\"odinger group, yielding henceforth a clear-cut geometrical status to the latter \cite{DGH,DHH}; see also \cite{PBD}. We refer to  \cite{RU} for a modern and recent review of the Schr\"odinger and Schr\"odinger-Virasoro symmetries.

\goodbreak

A few years ago, Son \cite{Son} and, independently, Balasubramanian and McGreevy~\cite{Bala} have put forward a geometrical realization of the Schr\"odinger group as a group of \textit{iso\-metries} of some Lorentz metric on a two-parameter spacetime extension, in the framework of the AdS/CFT cor\-respondence initiated by Maldacena \cite{Mal,AGMOO}. 
From then on, this \textit{non-relativistic holography}
has triggered much interest within a wide range of subjects, for instance, in ageing-gravity duality \cite{MP}, non-relativistic field theory \cite{Gol}, string theory and black hole
physics \cite{MMT,LW}. In condensed matter physics, the Schr\"odinger group turns out to be
also the  dynamical symmetry group of the two-body interactions in ultracold
fermionic atoms \cite{Bala,Son}. See also recent work \cite{BMM} on the conformal symmetries of the unitary Fermi gas.

It has been pointed out in \cite{RS,MMT,BHR,CBDY,HR} that non-relativistic conformal sym\-metries for backgrounds arising in string theory and black hole geo\-metry should be best viewed as asymptotic symmetries of AdS spacetime associated with a definite notion of conformal boundary. These authors also called attention to work of Fefferman-Graham \cite{FG} which should, conversely, provide efficient geo\-metrical means to deal with expansions of these asymptotic symmetries to the bulk spacetime, endowed with a \textit{Poincar\'e metric}. This is precisely the viewpoint we will espouse in this article in an effort to adapt the Fefferman-Graham (FG) construct to the particular instance of the Schr\"odinger symmetry. We mention, in this vein, another approach using the alternative notion of \textit{ambient space} \cite{FG} used in \cite{LN} to describe conformal {\em pp}-waves.

Let us now recall that the above-mentioned extension of the AdS/CFT correspondence to
non-relativistic field theory is based on using the (locally defined) metric \cite{Bala,Son,BHR,LW}
\begin{equation}
\hrg=
\frac{1}{r^2}\left[
\sum_{i=1}^d{(dx^i)^2}+2dtds+dr^2-\frac{dt^2}{r^2}\right]
\label{BSmetric}
\end{equation}
on a $(d+3)$-dimensional relativistic spacetime, whose
key property is that its group of {\it iso\-metries} is the \textit{Schr\"odinger group} of non-relativistic {\it conformal}
transformations 
of $(d+1)$-dimensional
Galilei spacetime, coordinatized by $(x^1,\ldots,x^d,t)$.

The purpose of this article is to provide an appropriate geometrical interpretation of such a manifold which, as we will show, turns out to be an instance of what we will call a ``Schr\"odinger manifold''. Let us underline that we will consider the only cases where the spatial dimension is $d>0$. (See, e.g., \cite{HM-T} for a thorough study of the AdS/CFT correspondence in the case $d=0$.)

We now summarize the main outcome and results of this article. 

Our approach strongly relies, on the one hand, on the general notion of a conformal Bargmann structure above non-relativistic spacetime (Definition \ref{ConfBargStrDef}), and, on the other hand, on an adaptation to this non-relativistic conformal structure of the FG formal theory of Poincar\'e metrics. This standpoint will help us introduce, via Definition \ref{SchManifDef}, the novel notion of \textit{Schr\"odinger manifold}, endowed with both a Poincar\'e metric and a null Killing vector field, and whose conformal boundary corresponds precisely to our original conformal Bargmann structure. Such Schr\"odinger manifolds are, indeed, exemplified by the Poincar\'e metric $\rg^+=\hrg+dt^2/r^4$ and the null Killing vector field $\partial/\partial{s}$, read off Equation (\ref{BSmetric}). This is the content of Theorem \ref{mainThm}, the main upshot of our article. We will furthermore prove (Proposition \ref{ProHomSpace}) that this emblematic example actually stems from a certain homogeneous space, $\hM$, of the Schr\"odinger group $\Sch(d+1,1)$, the latter being the maximal group of isometries of $\hM$ (Proposition \ref{mainThm0}).

The article is organized as follows.

Section \ref{ConfBargSection} introduces the basics of conformal non-relativistic geometry, namely the definition of a \textit{conformal Bargmann structure} above a Newton-Cartan structure on ($d+1$)-dimensional spacetime. We recall, and put in a geometrical guise, the covariant Schr\"odinger equation and its relationship with the conformal Laplace (Yamabe) operator acting on densities. The Schr\"odinger group is then naturally intro\-duced in terms of the automorphisms of a conformal Bargmann structure.

\goodbreak

Our definition of \textit{Schr\"odinger manifolds} is presented in Section \ref{SchManifSection}. It funda\-mentally relies on the construction of the ``Poincar\'e'' formal defor\-mation of a conformal (pseudo-)Riemannian structure due to Fefferman and Graham. Emphasis will be put on the Lorentzian case, relevant to deform conformal Bargmann structures. The r\^ole of a special null Killing vector field will also be highlighted in the definition of Schr\"odinger manifolds. 

Section \ref{SchLieGroupSection} is concerned with the global structure of the Schr\"odinger group. The Schr\"odinger Lie algebra, $\sch(d+1,1)$, spanned by the vector fields (\ref{schd+1,1}) of the flat Bargmann structure is chosen to be integrated inside the conformal group $\rO(d+2,2)$ viewed as the group of isometries of the ambient vector space $\bbR^{d+4}$, endowed with the metric (\ref{G1}). The \textit{Schr\"odinger group} $\Sch(d+1,1)$ is then defined as the stabilizer of some nilpotent element, $Z_0$, in the Lie algebra $\ro(d+2,2)$. Its relationship with the manifold of null geodesics of compactified Minkowski space $\Ein_{d+1,1}$ is revealed.

Section \ref{HomSchManifSection} gathers the main results of the article. Much in the spirit of the Klein program, we seek examples of Schr\"odinger manifolds as homogeneous spaces of the Schr\"odinger group $\Sch(d+1,1)$ itself. The outcome is given by Propositions~\ref{ProHomSpace} and \ref{ProConfBargHomSpace}. These \textit{homogeneous Schr\"odinger manifolds} turn out to be open submanifolds of $\AdS_{d+3}$, and their topology is completely worked out (see  Figure \ref{Fig1} for an illustration). We furthermore show that the Schr\"odinger group is actually their maximal group of isometries. The AdS/CFT metric (\ref{BSmetric}) acquires, hence, a global status as the canonical metric of our Schr\"odinger-homogeneous space arising as a (Poincar\'e-)Schr\"odinger metric inherited from the FG construction.

In Section \ref{ConclusionSection} we summarize the content of the article and draw several conclusions. We also offer perspectives related, among others, to open problems regarding the existence and uniqueness of Schr\"odinger manifolds.


\section{Schr\"odinger equation and conformal Bargmann structures}\label{ConfBargSection}

Let us recall that a \textit{Bargmann structure} \cite{DBKP} is a principal $H$-bundle $\pi:M\to\cM$ over a $(d+1)$-dimensional smooth manifold $\cM$, where $H\cong(\bbR,+)$ or $\rU(1)$; its total space, $M$, is assumed to carry a Lorentz metric, $\rg$, the
fundamental vector field, $\xi$, of the $H$-action being null, $\rg(\xi,\xi)=0$, and covariantly constant  with respect to the Levi-Civita connection, $\nabla\xi=0$. 

It has been proved \cite{DBKP} that a Bargmann structure $(M,\rg,\xi)$ projects onto a ``Newton-Cartan'' (NC) structure on non-relativistic spacetime $\cM=M/H$ \cite{Kunzle}.  The nowhere vanishing $1$-form $\theta=\rg(\xi)$ associated with $\xi$ via the metric, $\rg$, is closed; it therefore descends onto the \textit{time} axis $T=M/\ker(\theta)$ as a $1$-form which we call the ``clock'' of the structure. Bargmann structures are interpreted as \textit{generalized pp-waves} in general relativity; see, e.g., \cite{EK,KSMCH,DGH}.  

We recall that the canonical \textit{flat} Bargmann structure on~$M=\bbR^{d+2}$, with $H=(\bbR,+)$, is given by
\begin{equation}
\medbox{
\rg=\sum_{i=1}^d dx^i\otimes{}dx^i+2dt\odot{}ds
\qquad
\&
\qquad
\xi=\frac{\partial}{\partial s}
}
\label{FlatBargmannStructure}
\end{equation}
where we have put $t=x^{d+1}$, and $s=x^{d+2}$; we will use the shorthand notation $\bbR^{d+1,1}=(\bbR^{d+2},\rg)$; also will ``$\odot$'' denote the symmetrized tensor product. Here $(x^1,\ldots,x^d)$ are ``spatial'' coordinates, and $t$ stands for the absolute time co\-ordinate on Galilei spacetime such that
\begin{equation}
\theta=dt
\label{theta}
\end{equation}
while $s$ is a 
coordinate homo\-geneous to an action per mass. 

\subsection{Covariant Schr\"odinger equation}

We first introduce the useful notion of $\lambda$-densities spanning the $\Diff(M)$-module $\cF_\lambda(M)$ whose elements can be locally written as $\Psi=f\vert\Vol\vert^\lambda$, with $f\in{}C^\infty(M,\bbC)$, if $\Vol$ is a volume element of $M$. The associated $\Vect(M)$-module structure of~$\cF_\lambda(M)$ is then defined via the Lie derivative 
$\sfL_X^{\!\lambda}{f}=X(f)+\lambda\Div(X)f$, for all $X\in\Vect(M)$.

Let $(M,\rg)$ be a $n$-dimensional pseudo-Riemannian manifold. We recall that the \textit{Yamabe operator}, or conformal Laplacian \cite{Bes,DO,East}, is the conformally-invariant dif\-ferential operator 
$\Delta_\rg^\conf:\cF_\frac{n-2}{2n}(M)\to\cF_\frac{n+2}{2n}(M)$ defined by
$\Delta_\rg^\conf=\Delta_\rg-\frac{n-2}{4(n-1)}R(\rg)$, where~$R(\rg)$ denotes the scalar curvature of the Levi-Civita connection of $(M,\rg)$. 

\begin{pro}\label{SchEqPro}
Given a Bargmann manifold $(M,\rg,\xi)$ of dimension $n=d+2$, the system 
\begin{equation}
\Delta_\rg^\conf\,\Psi=0
\qquad
\&
\qquad
\frac{\hbar}{i}\sfL_\xi^{\!\lambda}\,\Psi=m\,\Psi
\label{SchEq}
\end{equation}
with $\lambda=\frac{d}{2d+4}$ descends as the covariant Schr\"odinger equation of mass $m$ on the associated Newton-Cartan spacetime.
\end{pro}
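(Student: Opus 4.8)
The plan is to pass to a local trivialisation of $\pi:M\to\cM$ adapted to the fundamental field, reduce the equivariant $\lambda$-density $\Psi$ to a wave function on the base, and then evaluate the Yamabe operator on such reduced densities. The starting observation is that $\nabla\xi=0$ forces $\Div(\xi)=\Tr(\nabla\xi)=0$, so the modified Lie derivative collapses to $\sfL_\xi^{\!\lambda}f=\xi(f)$, while $\xi$ being Killing leaves $\Vol$ invariant, whence $|\Vol|^\lambda$ is $\xi$-invariant. Choosing a fibre coordinate with $\xi=\partial/\partial s$, the second equation of (\ref{SchEq}) reads $\xi(f)=(im/\hbar)f$, so that locally $f=e^{ims/\hbar}\psi$ with $\psi$ an $s$-independent amplitude descending to $\cM$. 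The weight is the pivot of the whole argument: for $n=d+2$ one has $\lambda=\frac{d}{2d+4}=\frac{n-2}{2n}$, exactly the Yamabe domain weight, which is also precisely the weight making $\Psi\mapsto\psi$ land $\psi$ in the density bundle over the $(d+1)$-dimensional Newton-Cartan spacetime on which the covariant Schr\"odinger operator is meant to act.

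Next I would put $\rg$ into the adapted normal form of a Bargmann metric, $\rg=\gamma_{ij}\,dx^i dx^j+2\,dt\odot(ds+A)$, in which $\gamma$ is the spatial metric, $\theta=\rg(\xi)=dt$ is the clock, and the $1$-form $A$ carries the Newton-Cartan gauge potential. A short computation gives $|\det\rg|=\det\gamma$, so that $\sqrt{|\det\rg|}$ is free of $s$, and the Laplace-Beltrami operator $\Delta_\rg$ splits into the covariant spatial Laplacian $\gamma^{ij}\nabla_i\nabla_j$, a mixed term $2\,\partial_t\partial_s$ produced by the null pairing $2\,dt\odot ds$, and first-order terms built from $A$. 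Inserting $\partial_s\mapsto im/\hbar$ from the equivariance condition converts the mixed term into $\frac{2im}{\hbar}\nabla_t$ and thereby turns $\Delta_\rg\Psi=0$ into a Schr\"odinger-type equation for $\psi$ on $\cM$, the kinetic term acquiring its standard coefficient $\hbar^2/2m$ after normalisation; in the flat structure (\ref{FlatBargmannStructure}), where $R(\rg)=0$, this already yields the free equation $i\hbar\,\partial_t\psi=-\frac{\hbar^2}{2m}\sum_i\partial_i^2\psi$.

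The main obstacle is the curvature piece $-\frac{n-2}{4(n-1)}R(\rg)$ of the Yamabe operator, which is exactly what promotes the bare wave equation to the \emph{covariant} Schr\"odinger equation. Here I would invoke the curvature identity underlying the DBKP projection, expressing $R(\rg)$ of a Bargmann metric through the Newton-Cartan curvature scalar on $\cM$ (an $s$-independent function pulled back along $\pi$), and check that the conformal coupling $\frac{n-2}{4(n-1)}$ assembles with the reduced Laplacian into precisely the curvature term of the covariant Schr\"odinger operator. The delicate point is the book-keeping that the single weight $\lambda=\frac{n-2}{2n}$ simultaneously makes the density reduction consistent and the curvature coupling correct; once this is verified, the conformal invariance of $\Delta_\rg^\conf$ ensures that the resulting equation on $\cM$ is covariant under the conformal Bargmann automorphisms, i.e. under the Schr\"odinger group, which identifies it with the covariant Schr\"odinger equation of mass $m$.
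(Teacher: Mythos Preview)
Your sketch is essentially the route the paper has in mind: it does not give a self-contained proof either, but simply points to the derivation in \cite{DBKP} together with the observation that $\nabla\xi=0\Rightarrow\Div(\xi)=0$, so that $\sfL_\xi^{\!\lambda}$ collapses to $\xi$ on functions --- exactly your opening step. Your adapted-coordinate reduction of $\Delta_\rg$ on equivariant densities $f=e^{ims/\hbar}\psi$ is precisely the DBKP computation the paper is invoking.

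The one genuine divergence is in how the scalar-curvature piece of the Yamabe operator is disposed of. The paper does \emph{not} attempt to match $-\frac{n-2}{4(n-1)}R(\rg)$ against a Newton--Cartan curvature term; instead it records that in \cite{DBKP} the Newton--Cartan field equations $\Ric(\rg)=4\pi G\varrho\,\theta\otimes\theta$ are assumed, and since $\rg^{-1}(\theta,\theta)=\rg(\xi,\xi)=0$ this forces $R(\rg)=0$, so the Yamabe operator reduces to the bare Laplacian and the issue evaporates. Your proposal to keep $R(\rg)$ general and identify the conformal coupling with a curvature term of the covariant Schr\"odinger operator is more ambitious, but note that the covariant NC Schr\"odinger equation carries the gravitational potential through the connection (your $A$ and the missing $-2V\,dt^2$ term in the normal form), not through a scalar-curvature contribution; the bookkeeping you allude to would have to show that $R(\rg)$ equals the spatial scalar curvature $R(\gamma)$ and then argue what r\^ole, if any, that plays downstairs. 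The paper sidesteps all of this by working under the field equations.
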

The proof of Proposition \ref{SchEqPro} relies essentially on the derivation given in \cite{DBKP}, and on the fact that the fundamental vector field, $\xi$, is divergencefree, $\Div(\xi)=0$. In the latter reference, the NC field equations, $\Ric(\rg)=4\pi{}G\varrho\,\theta\otimes\theta$, where~$\varrho$ stands for the mass density of the sources, were assumed to hold, thus implying $R(\rg)=0$.

\begin{rmk}
{\rm
The structural group $H$ may be compact in some special
instances, e.g., $H=\rU(1)$ for a Taub-NUT like solution of NC field
equations. This leads, in view of the second equation in
(\ref{SchEq}), to the quantization of mass \cite{DGH}. From now on, we
shall be mainly concerned with the case $H=(\bbR,+)$. 
}
\end{rmk}

\subsection{Symmetries of the Schr\"odinger equation}

Denote by $[\Phi\mapsto{}\Phi_\lambda]$ the action of $\Diff(M)$ on $\cF_\lambda(M)$. 
A \textit{symmetry} of the Schr\"odinger equation is a local diffeomorphism $\Phi\in\Diff_\mathrm{loc}(M)$ such that
\begin{equation}
\Delta_\rg^\conf\circ{}\Phi_\lambda=\Phi_\mu\circ\Delta_\rg^\conf
\qquad
\&
\qquad
\sfL_\xi^{\!\lambda}\circ{}\Phi_\lambda =\Phi_\lambda\circ{}\sfL_\xi^{\!\lambda}
\label{SymmSchEq}
\end{equation}
with the weights $\lambda=\frac{d}{2d+4}$, and $\mu=\frac{d+4}{2d+4}$.

\begin{pro}\cite{DGH}
\label{SymSchPro}
The symmetries of the Schr\"odinger equation 
form the ``Schr\"o\-dinger (pseudo-)group''
$
{
\Sch(M,\rg,\xi)=\Conf_\mathrm{loc}(M,\rg)\cap\Aut(M,\xi)
}
$
consisting of those $\Phi\in\Diff_\mathrm{loc}(M)$ such that
\begin{equation}
\medbox{
\Phi^*\rg=\Omega^2_\Phi\cdot\rg
\qquad
\&
\qquad
\Phi_*\xi=\xi
}
\label{SchroedingerAutomorphism}
\end{equation}
for some $\Omega_\Phi\in{}C^\infty(M,\bbR^*_+)$ depending on $\Phi$.\footnote{We will confine considerations to conformal dif\-feomorphisms of $(M,\rg)$ that commute with the $H$-action on $M$, hence satisfying (\ref{SchroedingerAutomorphism}). 
We will \textit{not} consider, here, the larger (pseudo-)group of all conformal transformations $\Phi$ of $(M,\rg)$ that permute the $H$-orbits, i.e., such that $\xi\wedge\Phi_*\xi=0$.} These $\Phi\in\Sch(M,\rg,\xi)$ permute, hence, the solutions of the Schr\"odinger equation (\ref{SchEq}) ac\-cording to
\begin{equation}
\Psi\mapsto(\Phi_\lambda)_*\Psi
\label{SchAction}
\end{equation}
with $\lambda=\frac{d}{2d+4}$.
This group descends onto NC spacetime $\cM$ as the ``center-free Schr\"odinger (pseudo-)group'' $\Sch(M,\rg,\xi)/H$.
\end{pro}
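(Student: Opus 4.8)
The plan is to show that the two conditions in (\ref{SymmSchEq}) are equivalent, respectively, to the two conditions in (\ref{SchroedingerAutomorphism}), treating each separately, and then to read off the pseudo-group structure, the action on solutions, and the descent to $\cM$.

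First I would dispose of the second condition $\sfL_\xi^\lambda \circ \Phi_\lambda = \Phi_\lambda \circ \sfL_\xi^\lambda$. By naturality of the Lie derivative of densities under diffeomorphisms, $\Phi_\lambda \circ \sfL_X^\lambda \circ (\Phi_\lambda)^{-1} = \sfL_{\Phi_*X}^\lambda$ for every $X \in \Vect(M)$, so this condition says precisely that $\sfL_Y^\lambda = 0$ on $\cF_\lambda(M)$ with $Y = \Phi_*\xi - \xi$. Writing $\sfL_Y^\lambda(f|\Vol|^\lambda) = (Y(f) + \lambda\Div(Y)f)|\Vol|^\lambda$ and testing on $f = 1$ forces $\Div(Y) = 0$ (recall $\lambda = d/(2d+4) \neq 0$ for $d > 0$), after which $Y(f) = 0$ for all $f$ yields $Y = 0$, that is $\Phi_*\xi = \xi$. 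The converse is the same computation run backwards.

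Next I would treat the first condition $\Delta_\rg^\conf \circ \Phi_\lambda = \Phi_\mu \circ \Delta_\rg^\conf$, to be matched with $\Phi^*\rg = \Omega_\Phi^2 \cdot \rg$. The backward implication uses two standard properties of the Yamabe operator for the weights $\lambda = (n-2)/2n$ and $\mu = (n+2)/2n$ (with $n = d+2$): its naturality, $\Phi_\mu \circ \Delta_\rg^\conf = \Delta_{\Phi_*\rg}^\conf \circ \Phi_\lambda$, and its conformal invariance on the density bundles, $\Delta_{\Omega^2\rg}^\conf = \Delta_\rg^\conf$; combining these with conformality of $\Phi$ yields the intertwining relation at once. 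The forward implication is a rigidity statement and is, I expect, the main obstacle: one passes to principal symbols. Since $\Delta_\rg^\conf = \Delta_\rg - \frac{n-2}{4(n-1)}R(\rg)$ and the density factors effecting $\Phi_\lambda, \Phi_\mu$ are of order zero, the leading-symbol part of the intertwining relation forces $\Phi$ to carry the symbol of $\Delta_\rg$, namely the inverse metric, into a pointwise positive multiple of itself; equivalently $\Phi^*\rg = \Omega_\Phi^2 \rg$ with $\Omega_\Phi \in C^\infty(M,\bbR^*_+)$, the positivity following from continuity and non-degeneracy of $\rg$. The care here lies in verifying that the subleading curvature and Jacobian terms cannot conspire to weaken this leading-order conclusion.

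Finally I would assemble the pieces. The two equivalences show that a symmetry is exactly an element of $\Conf_\mathrm{loc}(M,\rg) \cap \Aut(M,\xi)$; since each factor is stable under composition and inversion, so is their intersection, making $\Sch(M,\rg,\xi)$ a pseudo-group. That symmetries permute solutions of (\ref{SchEq}) is immediate from (\ref{SymmSchEq}): if $\Delta_\rg^\conf\Psi = 0$ then $\Delta_\rg^\conf((\Phi_\lambda)_*\Psi) = (\Phi_\mu)_*(\Delta_\rg^\conf\Psi) = 0$, while commutation of $\sfL_\xi^\lambda$ with $(\Phi_\lambda)_*$ preserves the mass eigenvalue $m$; hence (\ref{SchAction}) sends mass-$m$ solutions to mass-$m$ solutions. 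For the descent, $\Phi_*\xi = \xi$ means $\Phi$ commutes with the flow of $\xi$, hence with the $H$-action, so $\Phi$ projects to a transformation of $\cM = M/H$; the elements acting trivially downstairs are exactly those of $H$, yielding the center-free Schr\"odinger pseudo-group $\Sch(M,\rg,\xi)/H$.
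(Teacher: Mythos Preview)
The paper does not prove this proposition: it is stated with the citation \cite{DGH} and no argument is given in the text, so there is nothing to compare your proposal against. That said, your argument is sound on its own. The reduction of the second intertwining condition to $\Phi_*\xi=\xi$ via naturality of the Lie derivative on densities is clean and correct (the test on constants uses $\lambda\neq0$, which holds since $d>0$). For the first condition, your principal-symbol argument is the right mechanism: naturality gives $\Phi_\mu^{-1}\circ\Delta_\rg^\conf\circ\Phi_\lambda=\Delta_{\Phi^*\rg}^\conf$, so the intertwining relation reads $\Delta_{\Phi^*\rg}^\conf=\Delta_\rg^\conf$, and comparing leading symbols in a coordinate trivialization of the density bundles yields $|\Phi^*\rg|^{1/n}(\Phi^*\rg)^{ij}=|\rg|^{1/n}\rg^{ij}$, which is precisely conformal equivalence; positivity of the conformal factor then follows from preservation of the Lorentz signature in dimension $n=d+2>2$. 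The closing remarks on the pseudo-group structure, the action (\ref{SchAction}) on solutions, and the descent to $\cM=M/H$ are routine and correct.
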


\subsection{Conformal Bargmann structures}\label{ConfBargStrSubSection}

In view of Proposition \ref{SymSchPro}, one of the fundamental geometrical objects associated with the Schr\"odinger equation is clearly the conformal class $[\rg]$ of the Bargmann metric $\rg$ on (extended spacetime) $M$. 
Indeed, given any $\overline{\rg}\in[\rg]$, one duly has $\rg(\xi,\xi)=\overline{\rg}(\xi,\xi)=0$; now, to further insure $\nabla\xi=\overline{\nabla}\xi=0$, i.e., that $\rg$ and $\overline{\rg}$ are Bargmann-equivalent, one finds
\begin{equation}
\rg\sim\overline{\rg}
\qquad
\iff
\qquad
\overline{\rg}=\Omega^2\rg
\quad
\&
\quad
d\Omega\wedge\theta=0
\label{EqRel}
\end{equation}
which infers that the conformal factor, $\Omega$, be a function of the time axis,~$T$.

\goodbreak

\begin{defi}\label{ConfBargStrDef}
A ``conformal Bargmann structure'' is an equivalence class $(M,[\rg],\xi)$ of Bargmann manifolds for the equivalence relation (\ref{EqRel}).
\end{defi}
Such a structure basically involves a \textit{conformal class} of Lorentz metrics on a principal fibre bundle $\pi:M\to\cM$ with structure group $(\bbR,+)$ --- or $\rU(1)$ --- whose fundamental vector field is lightlike, and parallel.

The Schr\"odinger group, as defined in Proposition \ref{SymSchPro}, is therefore isomorphic to the group of automorphisms of the conformal Bargmann structure defined in Proposition \ref{ConfBargStrDef}, namely $\Sch(M,\rg,\xi)\cong\Aut(M,[\rg],\xi)$.


\section{General definition of Schr\"odinger manifolds}\label{SchManifSection}

We have, so far, unveiled new geometrical structures involving
conformal structures in the presence of a null, parallel, and nowhere
vanishing vector field admitting a clear-cut physical interpretation via the definition of the mass in the Schr\"odinger equation
(\ref{SchEq}). 
Accordingly, our main goal will now be to specialize the
FG definition of ``Poincar\'e metrics'' associated with
conformal structures to our particular, non-relativistic, framework
featuring conformal Bargmann structures. 

\subsection{Formal theory of Poincar\'e metrics \& conformal infinity ac\-cording to Fefferman-Graham}\label{sectionFG}

In their quest of \textit{conformal invariants} of a conformal structure of signature $(p,q$), Fefferman and Graham \cite{FG} have devised two equivalent constructs:
\begin{enumerate}
\item
the {ambient metric} on a pseudo-Riemannian manifold of signature $(p+1,q+1)$,
\item\label{Poincare}
the {Poincar\'e metric} on a pseudo-Riemannian manifold of signature $(p+1,q)$.
\end{enumerate}

\goodbreak

In order to make contact with the aforementioned physics literature, involving local expressions for the metric in Poincar\'e patches, we restrict further considera\-tions to item \ref{Poincare}. Let us hence recall the general definition \cite{FG} of Poincar\'e metrics that will appear as the cornerstone of the subsequent study.

Start with a manifold $M$ and a conformal class $[\rg]$ of metrics of signature $(p,q)$, such that $n=p+q>2$. Consider now a manifold ${M^+}$ such that $M=\partial{M^+}$. Let $r\in{}C^\infty({M^+})$ verify $r>0$ in $\Int({M^+})$, and ${r=0}$ \& ${dr\neq0}$ on $\partial{M^+}$ (this function is called a \textit{defining function} for $M$). A metric ${\rg^+}$ of signature $(p+1,q)$ on $\Int(M^+)$ is ``conformally compact'' iff $r^2\,{\rg^+}$ extends smoothly to ${M^+}$ and $r^2\,{\rg^+}\strut\vert_{TM}$ is non-degenerate \cite{Pen}.

\begin{defi}\label{FGconfInf}\cite{FG}
We say that $({M^+},{\rg^+})$ has $(M,[\rg])$ as {conformal infinity} whenever 
$
r^2\,{\rg^+}\strut\vert_{TM}\in[\rg].
$
\end{defi}

\begin{defi}\label{FG}\cite{FG}
A {Poincar\'e metric} for $(M,[\rg])$ is a pair $({M^+},{\rg^+})$ where $M^+$ is an open neighborhood of $M\times\{0\}$ in $M\times\bbR^+$ such that
\begin{itemize}
\item
$(M^+,\rg^+)$ has $(M,[\rg])$ as conformal infinity
\item
$(M^+,\rg^+)$ is an asymptotic solution of Einstein's equation
$\Ric({\rg^+})+k{\rg^+}=0$ (normalization condition: $k=n$).
\end{itemize}
\end{defi}

Poincar\'e metrics admit the local expression
\begin{equation}{\rg^+}=\frac{1}{r^2}\left[\sum_{i,j=1}^n{\rg^+_{ij}(x,r)dx^i\otimes{}dx^j+dr\otimes{}dr}\right]
\label{g+}
\end{equation}
where the $\rg^+_{ij}(x,r)$ are formal power series in the parameter $r$.

It has been proven \cite{FG} that a Poincar\'e metric for a given pair $(M,[\rg])$ exists and is unique, up to diffeomorphisms fixing $M$, for $n$ {odd} provided the $\rg^+_{ij}(x,r)$ are even functions of $r$.
If $n$ is even, $\Ric({\rg^+})+k{\rg^+}=\cO(r^{n-2})$ uniquely determines $\rg^+$ (modulo $\cO(r^n)$), again up to diffeomorphisms fixing $M$, for which the $\rg^+_{ij}(x,r)$ are even functions of $r$ (modulo $\cO(r^n)$).

\goodbreak

Let us illustrate this construction by the well-known example of the Einstein space $\Ein_{n-1,1}=\partial(\AdS_{n+1})$, the archetype of Poincar\'e metric being provided by the {Anti de-Sitter} ($\AdS$) metric whose conformal infinity is the Einstein conformal structure (compactified Minkowski space). Here, $M=\Ein_{n-1,1}$, and $M^+=\AdS_{n+1}$.

Start with $\bbR^{n+2}$, where $n=d+2$, with the following metric
\begin{equation}
G=\sum_{i=1}^d{dx^i\otimes{}dx^i}+2dx^{d+1}\odot{}dx^{d+2}+2dx^{d+3}\odot{}dx^{d+4}
\label{G}
\end{equation}
of signature~$(n,2)$, and consider the unit hyper\-boloid
\begin{equation}
{\AdS_{n+1}}=\{X\in\bbR^{n,2}\,\vert\,\overline{X}X=-1\}
\label{AdSn+1}
\end{equation}
where $\overline{X}\equiv{}G(X)$ as a shorthand notation. 
The induced metric 
$
{\rg^+}=G\strut\vert_{T\AdS_{n+1}}
$
is Lorentzian (of signature $(n,1)$), and of constant sectional curvature.
View now ${\AdS_{n+1}}\subset\bbP^{n+1}(\bbR)$ as the projectivized open ball $\cB=
\{X\in\bbR^{n,2}\,\vert\,\overline{X}X<0\}$. Its conformal boundary is the Einstein space 
\begin{equation}
{\Ein_{n-1,1}}=\bbP\cQ
\label{Einsn-1,1}
\end{equation}
which is the projectivization of the null cone $\cQ=\{Q\in\bbR^{n, 2}\setminus\{0\}\,\vert\,\overline{Q}Q=0\}\cong\bbR^*_+\times(S^{n-1}\times{}S^1)$, and is endowed with the conformal class $[\rg]$ of Lorentzian metrics in\-herited from $\strut{}G\vert_{T\cQ}$; see, e.g., \cite{Fra}.
Conformal infinity of $(\AdS_{n+1},{\rg^+})$ is therefore
$(\Ein_{n-1,1},[\rg])$, and both
\begin{eqnarray}
\label{AdS}
\AdS_{n+1}&\cong&\bbR^n\times{}S^1,\\
\label{Ein}
\Ein_{n-1,1}&\cong&(S^{n-1}\times{}S^1)/\bbZ_2
\end{eqnarray}
are homogeneous spaces of $\rO(n,2)$. 

We refer to \cite{Gib} for a comprehensive review of the geometry of $\AdS$ spacetimes.

\subsection{Schr\"odinger manifolds \& conformal Bargmann structures as null infinity}\label{SchManif}

In the wake of the previously reviewed work \cite{FG}, we will  
introduce the new notion of a (Poincar\'e-)Schr\"odinger manifold whose ``confor\-mal infinity'' is a given conformal Bargmann structure $(M,[\rg],\xi)$
in the sense of Definition \ref{ConfBargStrDef}. 

\begin{defi}\label{SchManifDef}
A {(Poincar\'e-)Schr\"odinger manifold} for a conformal {Barg\-mann structure} $(M,[\rg],\xi)$ is a triple $(\hM,\hrg,\hxi)$ with $M=\partial\hM$, where $\hrg$ is a Lorentz metric on $\Int(\hM)$, and~$\hxi$ a nowhere vanish\-ing lightlike Killing vector field for $\hrg$ such that
\begin{enumerate}
\item\label{one}
$\hxi\,\big\vert_{T^*M}=\xi$
\item\label{two}
$\hrg^{\,-1}\big\vert_{T^*M}=\mu\,\xi\otimes\xi$ (normalization condition: $\mu=1$)
\item\label{three}
$\rg^+=\hrg+\mu\,\htheta\otimes\htheta$ (where
$\htheta=\hrg(\hxi)$) is a {Poincar\'e metric} for $(M,[\rg])$.
\end{enumerate}
\end{defi}

Let us explain and justify the different items of Definition \ref{SchManifDef}.

The generator $\xi$ of the structure group of the principal $H$-bundle $\pi:M\to{}M/H$ extends smoothly to $\hM$. In fact, the null vector field $\xi$ enters the definition of the character of $H$ associated with the mass in the Schr\"odinger equation~(\ref{SchEq}); as such, it ought to give rise to a unique nowhere vanishing, null, Killing vector field $\hxi$ for $\hrg$. 
This justifies our first axiom. 

In Axiom 2, the real constant $\mu$ is, in fact, quite arbitrary; it allows, via the null vector field $\xi$, for extra terms in the metric $\hrg$ with higher order singularities 
at the conformal boundary, $M$.\footnote{See, e.g., Equations (\ref{hthetaBis}) and (\ref{hgGeneral}) below showing that the metric $\hrg$ exhibits a singular behavior $\sim{}r^{-4}$ at conformal infinity, namely $r^4\,\hrg\big\vert_{TM}\in[\theta\otimes\theta]$.
} The normalization condition is dictated by the form (\ref{BSmetric}) of the metric dealt with in the literature about non-relativisic AdS/CFT correspondence. 


The third axiom, in Definition \ref{SchManifDef}, resorts explicitly to the conformal class of Bargmann metrics; it is thus devised to make use of the FG approach to Poincar\'e metrics which is at our disposal (see Definition \ref{FG}). However, we will not address here the problem of the existence and uniqueness of (Poincar\'e-)Schr\"o\-dinger structures of Definition \ref{SchManifDef}. Instead, we will provide explicit examples. 
We duly recover the FG axioms if $\hxi$ is ignored, or if $\mu=0$.

\begin{pro}\label{gmuPro}
Let us set $\htheta=\hrg(\hxi)$, then
the family of symmetric tensor fields 
$\trg=\hrg+\mu\,\htheta\otimes\htheta$ para\-metrized by $\mu\in\bbR$ defines on $\hM$ a family of Lorentzian metrics for which $\txi=\hxi$ is a null, nowhere vanishing, Killing vector field.
\end{pro}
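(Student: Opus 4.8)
The plan is to check, in order, that $\txi=\hxi$ is $\trg$-null, that $\trg$ is a genuine Lorentz metric for every $\mu\in\bbR$, and that $\hxi$ stays Killing. Everything hinges on one structural fact: because $\hxi$ is $\hrg$-null, the covector $\htheta=\hrg(\hxi)$ is itself \emph{lightlike}. Indeed $\htheta^\sharp=\hxi$, so $\hrg^{-1}(\htheta,\htheta)=\hrg(\hxi,\hxi)=0$, and in particular $\htheta(\hxi)=\hrg(\hxi,\hxi)=0$. The null property of $\txi$ then follows instantly:
\[
\trg(\hxi,\hxi)=\hrg(\hxi,\hxi)+\mu\,\htheta(\hxi)^2=0 .
\]

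For the metric character of $\trg$ I would argue by a rank-one perturbation. The tensor $\mu\,\htheta\otimes\htheta$ is a symmetric rank-one modification of $\hrg$, so the matrix-determinant lemma (applied pointwise in any local frame) gives $\det(\trg)=\det(\hrg)\bigl(1+\mu\,\hrg^{-1}(\htheta,\htheta)\bigr)=\det(\hrg)$, the correction vanishing precisely because $\htheta$ is lightlike. Hence $\trg$ is nondegenerate at every point of $\Int(\hM)$. To fix the signature I would then invoke continuity in $\mu$: the family $\mu\mapsto\trg$ is continuous, reduces to $\hrg$ at $\mu=0$, and is nondegenerate throughout, so no eigenvalue can cross zero and the signature is constant along the family, namely Lorentzian. (Alternatively, in a Witt frame $(e_+,e_-,e_2,\dots)$ adapted to the null direction $e_+=\hxi$ one has $\htheta$ dual to $e_-$, and the perturbation alters only the $(e_-,e_-)$-entry, turning the hyperbolic block $\hrg(e_\pm,e_\mp)=1$ into one of determinant $-1$ for all $\mu$, the spacelike block being untouched.)

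Finally I would verify $\cL_{\hxi}\trg=0$. Since $\hxi$ is already Killing for $\hrg$, by Leibniz it suffices to show $\cL_{\hxi}\htheta=0$. Using the identity $\cL_X(\hrg(Y))=(\cL_X\hrg)(Y)+\hrg([X,Y])$ with $X=Y=\hxi$ yields $\cL_{\hxi}\htheta=(\cL_{\hxi}\hrg)(\hxi)+\hrg([\hxi,\hxi])=0$, the first term vanishing because $\hxi$ is Killing and the second because $[\hxi,\hxi]=0$; intrinsically, the flow of the Killing field $\hxi$ preserves $\hrg$ and hence commutes with the musical lowering that produces $\htheta$. Therefore $\cL_{\hxi}\trg=\cL_{\hxi}\hrg+\mu\,\cL_{\hxi}(\htheta\otimes\htheta)=0$, and $\txi=\hxi$, nowhere vanishing by hypothesis, is a null Killing field for every member of the family. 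The one genuinely delicate point is the signature invariance, and it is exactly here that the null hypothesis on $\hxi$ is indispensable: were $\hxi$ non-null, the factor $1+\mu\,\hrg^{-1}(\htheta,\htheta)$ would be nontrivial and, for suitable $\mu$, $\trg$ would degenerate or flip signature. All steps are tensorial and hold wherever $\hxi\neq0$, i.e.\ on all of $\Int(\hM)$.
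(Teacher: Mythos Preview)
Your argument is correct and complete. The one place where you genuinely diverge from the paper is the signature check. The paper works in an adapted frame in which the Gram matrix of~$\hrg$ has spectrum $\{\lambda_1,\ldots,\lambda_{d+1},+\lambda,-\lambda\}$ and computes the perturbed spectrum of~$\trg$ directly as $\{\lambda_1,\ldots,\lambda_{d+1},\,\half\mu\pm\sqrt{\lambda^2+(\half\mu)^2}\}$, reading off that the signature is preserved for all~$\mu$. Your route---the matrix-determinant lemma giving $\det(\trg)=\det(\hrg)$ because $\hrg^{-1}(\htheta,\htheta)=0$, followed by continuity of the eigenvalues in~$\mu$---is more invariant and avoids choosing a frame, while the paper's explicit eigenvalue list has the advantage of being constructive. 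Your Witt-frame remark is essentially the paper's computation in disguise. For the nullity and the Killing property you give the same arguments as the paper, only with more detail: the paper simply asserts that $\hxi$ Killing for~$\hrg$ entails the same for~$\trg$, whereas you spell out $\sfL_{\hxi}\htheta=0$ via the Leibniz rule. Nothing is missing.
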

\begin{proof}
If $\{\lambda_1,\ldots,\lambda_{d+1},+\lambda,-\lambda\}$ denotes the spectrum of the Gram matrix of $\hrg$ with respect to some basis, then the spectrum of the corresponding Gram matrix of $\trg$ is given by $\{\lambda_1,\ldots,\lambda_{d+1},\half\mu+\sqrt{\lambda^2+\left(\half\mu\right)^2},\half\mu-\sqrt{\lambda^2+\left(\half\mu\right)^2}\}$, with the same Lorentz signature. Since $\hxi$ is null for $\hrg$, i.e., $\htheta(\hxi)=0$, then $\txi=\hxi$ is clearly $\trg$-null. Moreover, the fact that $\hxi$ is a Killing vector field for~$\hrg$ entails that the same is true for~$\trg$.
\end{proof}

Had we put $\txi=\alpha\,\hxi$ for some $\alpha\in{}C^\infty(\hM,\bbR^*)$ in Proposition \ref{gmuPro}, we would have found that necessarily $d\alpha=0$, i.e., $\alpha\in\bbR^*$.

We will resort to Proposition \ref{gmuPro} in Section \ref{HomSchManifSection} where we  supply paragons of Schr\"odinger manifolds.


\section{Global structure of the Schr\"odinger group}\label{SchLieGroupSection}

\subsection{Flat conformal Bargmann structure and Schr\"odinger Lie algebra}\label{SchLieAlgSection}

The conformal automorphisms of a Bargmann structure $(M,\rg,\xi)$ --- which we will later on identify, for the flat structure (\ref{FlatBargmannStructure}), to the so-called ``Schr\"odinger group'' --- have been introduced in Proposition \ref{SymSchPro}. As read off Equations (\ref{SchroedingerAutomorphism}), they consist in (local) diffeo\-morphisms, $\Phi$, of~$M$ such that $\Phi^*\rg=\Omega_\Phi^2\cdot\rg$ and $\Phi_*\xi=\xi$ for some smooth, positive, function~$\Omega_\Phi$.
The latter turns out to be necessarily (the pull-back of) a function of the time axis, $T$. See Equation (\ref{EqRel}).


Accordingly, at the Lie algebraic level, the infinitesimal conformal auto\-mor\-phisms of such a structure span the so-called \textit{Schr\"odinger Lie algebra}, which is therefore the Lie algebra of those vector fields $Z$ of $M$ such that
\begin{equation}
\sfL_Z\,\rg=\varphi_Z\cdot\rg
\qquad
\&
\qquad
\sfL_Z\,\xi=0
\label{SchroedingerLieAlgebra}
\end{equation}
for some smooth function $\varphi_Z$, again necessarily defined on $T$.

The Schr\"odinger Lie algebra, $\sch(d+1,1)$, of the flat Bargmann structure (\ref{FlatBargmannStructure}) is therefore isomorphic to the Lie algebra of all smooth vector fields $Z=[x\mapsto\delta{x}]$ of $\bbR^{d+2}$ satisfying~(\ref{SchroedingerLieAlgebra}), i.e., of vector fields of the form
\begin{equation}
\delta{x}=\Lambda{}x
+
\Gamma
-\half\alpha\,\rg(x,x)\xi+\alpha\,\rg(\xi,x)x+\chi{}x
\label{schd+1,1}
\end{equation}
where $\Lambda\in\rso(d+1,1)$, $\Gamma\in\bbR^{d+2}$, and $\alpha,\chi\in\bbR$ are such that $\Lambda\xi+\chi\xi=0$. We find that $\dim(\sch(d+1,1))=\half(d^2+3d+8)$, so that $\dim(\sch(4,1))=13$ in the standard case $d=3$. Homogeneous Galilei transformations are generated by $\Lambda$, Bargmann translations by $\Gamma$, while $\alpha$ and $\chi$ generate inversions and dilations, respectively. 
The centre, $\rh\cong\bbR$, of $\sch(d+1,1)$ is generated by ``vertical'' translations~$\Gamma$, i.e., such that $\xi\wedge\Gamma=0$. The quotient $\sch(d+1,1))/\rh$ acts therefore on Galilei spacetime $E\cong\bbR^{d+1}$ as the Lie algebra of flat NC infinitesimal automorphisms; it is sometimes called the \textit{center-free} Schr\"odinger Lie algebra, and is isomorphic to $(\rso(d)\times\Sl(2,\bbR))\ltimes(\bbR^d\times\bbR^d)$.

\subsection{Schr\"odinger group as a subgroup of conformal group}
\label{SchLieGroupSection2}

Taking advantage of the content of the preceding section, let us focus attention on the global structure of the Schr\"odinger group, $\Sch(d+1,1)$, of the flat (conformal) Bargmann structure (\ref{FlatBargmannStructure}). The latter will be naturally chosen so as to integrate $\sch(d+1,1)$ \textit{inside} the ``conformal group'' of $\bbR^{d+1,1}$. 

Therefore, in view of (\ref{SchroedingerAutomorphism}), we will characterize the \textit{Schr\"odinger group} as a subgroup of the group, $\rO(d+2,2)$, of all linear isometries of $\bbR^{d+2,2}=\bbR^{d+1,1}\oplus\bbR^{1,1}$ endowed with the metric (\ref{G}) that we split according to
\begin{equation}
G=\left[\sum_{i=1}^d dx^i\otimes{}dx^i+2dx^{d+1}\odot{}dx^{d+2}\right]+2dx^{d+3}\odot{}dx^{d+4}
\label{G1}
\end{equation}
in order to render explicit the Bargmann metric $\rg$ as given by (\ref{FlatBargmannStructure}); this metric reads in matrix guise,\footnote{The matrix (\ref{GramG}) is the Gram matrix of some chosen basis that will not be further specified, unless otherwise stated.} 
\begin{equation}
G=\pmatrix{
\rg&0&0\cr
0&0&1\cr
0&1&0
}.
\label{GramG}
\end{equation}

We need, at this stage, a new geometric object, namely a preferred element, $Z_0$, of the Lie algebra, $\ro(d+2,2)$, of $\rO(d+2,2)$.

\begin{defi}
We will call ``special null vector" any $Z_0\in\ro(d+2,2)$ such that: (i)~$(Z_0)^2=0$, and (ii) $Z_0\neq0$.
\end{defi}

The following Lemma is classical; see, e.g.,~\cite{JMS,GS}.

\begin{lem}\label{lemmaZ0}
A special null vector is of the general form $Z_0=P_0\wedge{}Q_0$ for
some $P_0,Q_0\in\bbR^{d+2,2}\!\setminus\!\{0\}$ such that
$G(P_0,P_0)=G(Q_0,Q_0)=G(P_0,Q_0)=0$.\footnote{We will often use the
  identification $\ro(d+2,2)\cong\bigwedge^2\bbR^{d+2,2}$.} The set of
these vectors form a single adjoint orbit of
$\rO(d+2,2)$.\footnote{This nilpotent orbit has two connected components; in the case $d=2$ each one is symplectomorphic to the manifold of regularized Keplerian
  motions \cite{JMS,GS,DET}.} 
\end{lem}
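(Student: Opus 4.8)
The plan is to work entirely under the standard identification $\ro(d+2,2)\cong\bigwedge^2\bbR^{d+2,2}$, whereby a decomposable bivector $P\wedge Q$ is the $G$-skew endomorphism
\begin{equation}
(P\wedge Q)(x)=G(P,x)\,Q-G(Q,x)\,P,
\qquad x\in\bbR^{d+2,2},
\end{equation}
and the adjoint action of $\rO(d+2,2)$ reads $\Ad_g(P\wedge Q)=(gP)\wedge(gQ)$. The argument then splits into two parts: first, that every special null vector has the stated decomposable, totally isotropic form; second, that $\rO(d+2,2)$ acts transitively on such elements.

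For the first part I would start from the observation that $G$-skewness of $Z_0$ forces $\ker Z_0=(\Im Z_0)^\perp$, since $G(Z_0x,y)=-G(x,Z_0y)$ and $G$ is nondegenerate. The hypothesis $(Z_0)^2=0$ then reads $\Im Z_0\subseteq\ker Z_0=(\Im Z_0)^\perp$, i.e.\ $\Im Z_0$ is a totally isotropic subspace of $\bbR^{d+2,2}$. Since $G$ has signature $(d+2,2)$ with $d>0$, its Witt index equals $2$, so $\dim\Im Z_0\le2$; on the other hand the rank of the alternating form $x,y\mapsto G(Z_0x,y)$ is even, and it is nonzero because $Z_0\neq0$. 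Hence $\mathrm{rank}\,Z_0=\dim\Im Z_0=2$. Fixing a basis $\{P_0,Q_0\}$ of the $2$-plane $\Im Z_0$ and writing $Z_0x=G(A,x)\,P_0+G(B,x)\,Q_0$ for suitable $A,B$, the skew-adjointness relation forces $A,B\in\Im Z_0$ and pins down the coefficients so that $Z_0=P_0\wedge Q_0$ after rescaling one vector. Since $\Im Z_0$ is totally isotropic, $G(P_0,P_0)=G(Q_0,Q_0)=G(P_0,Q_0)=0$, as claimed. Conversely, a direct expansion of $(P_0\wedge Q_0)^2$ shows it vanishes precisely under these three isotropy relations, confirming that every such bivector is a special null vector.

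For the second part, let $Z_0=P_0\wedge Q_0$ and $Z_0'=P_0'\wedge Q_0'$ be two special null vectors, with totally isotropic image planes $W=\mathrm{span}(P_0,Q_0)$ and $W'=\mathrm{span}(P_0',Q_0')$. Because $G$ vanishes identically on each of $W,W'$, the linear map sending $P_0\mapsto P_0'$ and $Q_0\mapsto Q_0'$ is an isometry of $(W,G\vert_W)$ onto $(W',G\vert_{W'})$. Witt's extension theorem then guarantees that this partial isometry extends to a global isometry $g\in\rO(d+2,2)$, and for this $g$ one has $\Ad_g(Z_0)=(gP_0)\wedge(gQ_0)=P_0'\wedge Q_0'=Z_0'$. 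Thus any two special null vectors lie in the same $\rO(d+2,2)$-orbit, so the set of special null vectors is a single adjoint orbit.

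The routine steps are the skew-adjoint bookkeeping and the $(P_0\wedge Q_0)^2$ expansion. The two load-bearing inputs, which I expect to be the crux, are the decomposability of a rank-two $G$-skew operator (equivalently, that a $2$-dimensional totally isotropic image forces $Z_0=P_0\wedge Q_0$) and the correct invocation of \emph{Witt's extension theorem} for the full orthogonal group; the latter is exactly what delivers a single orbit even though, as noted, this nilpotent orbit is not connected.
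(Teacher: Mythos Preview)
Your argument is correct. Note, however, that the paper does not supply its own proof of this lemma: it is stated as ``classical'' with references to \cite{JMS,GS}, so there is nothing to compare against at the level of proof strategy. Your write-up therefore furnishes exactly what the paper omits.

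A couple of minor comments. Your parenthetical ``with $d>0$'' when bounding the Witt index is harmless but unnecessary: in signature $(d+2,2)$ the Witt index is $\min(d+2,2)=2$ already for $d\ge0$. In the decomposability step, the passage from ``$Z_0x=G(A,x)P_0+G(B,x)Q_0$ with skew-adjointness'' to ``$Z_0=P_0\wedge Q_0$ after rescaling'' is a little brisk; a clean way to phrase it is that $Z_0$ factors through $\bbR^{d+2,2}/(\Im Z_0)^\perp\cong(\Im Z_0)^*$, hence is determined by an element of $\Im Z_0\otimes\Im Z_0$, and $G$-skewness forces this element into the one-dimensional space $\bigwedge^2\Im Z_0$. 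Finally, your use of Witt's extension theorem is exactly right, and your closing remark is on point: it is precisely because you work with the full orthogonal group $\rO(d+2,2)$ (rather than a connected component) that Witt gives a \emph{single} orbit, consistent with the paper's footnote that the orbit has two connected components.
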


Our choice of origin of the orbit of special null vectors is performed by selecting $P_0=e_{d+2}$, and $Q_0=e_{d+3}$ where $e_i=\partial/\partial{x^{i}}$ for all $i=1,\ldots,d+4$. It thus reads
\begin{equation}
\medbox{
Z_0=\pmatrix{
0&0&\xi\cr
-\xi^*&0&0\cr
0&0&0
}
\in\ro(d+2,2)}
\label{Z0}
\end{equation}
where $\xi\in\bbR^{d+2}\setminus\{0\}$ is as in
(\ref{FlatBargmannStructure}), the superscript ``$*$'' standing for
the $\rg$-adjoint; thus, $\xi^*=\rg(\xi)$ is the
co\-vector~$\xi^*=\theta(=dt)$, interpreted as the \textit{Galilei
  clock} (see Section \ref{ConfBargSection}). This~$Z_0$ will henceforth be identified with the null generator,
$\xi$, of ``vertical translations''  on Bargmann space~$\bbR^{d+1,1}$ \cite{Duv1,HH}.  
\begin{pro}\label{DefSchStabPro}
 The Lie algebra $\sch(d+1,1)$ is isomorphic to the Lie algebra of the group
\begin{equation}
\medbox{
\Sch(d+1,1)=\{A\in\rO(d+2,2)\,\vert\,AZ_0=Z_0A\}
}
\label{DefSch}
\end{equation}
which we call the ``Schr\"odinger group''.
\end{pro}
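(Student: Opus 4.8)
The plan is to show first that the Lie algebra of the group $\Sch(d+1,1)$ defined by (\ref{DefSch}) is the centralizer of $Z_0$ in $\ro(d+2,2)$, and then to identify that centralizer with $\sch(d+1,1)$. To begin, I would note that $AZ_0=Z_0A$ is equivalent to $AZ_0A^{-1}=Z_0$, so that $\Sch(d+1,1)$ is exactly the $\Ad$-stabilizer of $Z_0$ in $\rO(d+2,2)$. Differentiating a one-parameter subgroup $A(\epsilon)$ at $\epsilon=0$ then shows that its Lie algebra is the centralizer $\fz(Z_0)=\{Z\in\ro(d+2,2)\mid [Z,Z_0]=0\}$. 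The statement thus reduces to establishing the Lie-algebra isomorphism $\fz(Z_0)\cong\sch(d+1,1)$.

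For the concrete verification I would work in the block decomposition adapted to $\bbR^{d+2,2}=\bbR^{d+1,1}\oplus\bbR^{1,1}$ used in (\ref{G1})--(\ref{GramG}). Writing a general $Z\in\ro(d+2,2)$ in the block form imposed by the skewness condition $Z^\top G+GZ=0$ --- namely an $\rso(d+1,1)$-block $\Lambda$, two vectors in $\bbR^{d+2}$, and a scalar $\chi$ coupling the $\bbR^{1,1}$-factor --- I would compute $[Z,Z_0]$ directly from the explicit matrix (\ref{Z0}). Setting each block of $[Z,Z_0]$ to zero, the off-diagonal blocks yield exactly two genuine constraints, that the special-conformal vector be proportional to $\xi$ and that $(\Lambda+\chi)\xi=0$, while the remaining blocks either vanish automatically or reduce to these same conditions, the nullity $\theta(\xi)=0$ of $\xi$ being precisely what removes any spurious constraint on the translation block. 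Comparing the surviving parameters with (\ref{schd+1,1}) --- the rotation $\Lambda\in\rso(d+1,1)$, the unconstrained translation $\Gamma$, the inversion parameter $\alpha$ now forced to enter only through $\alpha\,\xi$, and the dilation $\chi$, all subject to $\Lambda\xi+\chi\xi=0$ --- gives the bijection, and checking that it respects brackets finishes this route.

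The conceptual way to see why the constraints come out correctly, which I would use to organize and cross-check the calculation, is to recall that $\ro(d+2,2)$ is realized as the conformal Killing algebra of flat Bargmann space $\bbR^{d+1,1}$ (this is (\ref{schd+1,1}) read without its two constraints), and that, as stated below (\ref{Z0}), $Z_0=e_{d+2}\wedge e_{d+3}$ corresponds under this realization to the constant vector field $\xi=\partial/\partial s$ generating vertical translations. Then $[Z,Z_0]=0$ translates, up to sign, into $\sfL_Z\,\xi=0$, so that $\fz(Z_0)$ consists exactly of those conformal fields $Z$ (which automatically obey $\sfL_Z\,\rg=\varphi_Z\,\rg$) that in addition satisfy $\sfL_Z\,\xi=0$; these are precisely the defining equations (\ref{SchroedingerLieAlgebra}) of $\sch(d+1,1)$. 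The main obstacle will be the bookkeeping of this identification: one must verify that the matrix commutator condition does reduce to $\sfL_Z\,\xi=0$ with the correct sign convention, and that the nullity of $\xi$ is exactly what both frees the translation part and collapses the special-conformal generators onto the single direction $\xi$. As a final consistency check I would confirm the dimension $\dim\fz(Z_0)=\tfrac12(d^2+3d+8)=\dim\sch(d+1,1)$, with the stabilizer of the null line $\bbR\xi$ inside $\rso(d+1,1)$ contributing $\tfrac12(d^2+d+2)$ and the vectors $\Gamma$ and $\alpha\,\xi$ contributing $d+3$.
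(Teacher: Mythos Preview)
Your approach is correct and more streamlined than the paper's for the bare proposition. You work purely at the Lie-algebra level: identify the Lie algebra of $\Sch(d+1,1)$ with the centralizer $\fz(Z_0)$, compute $[Z,Z_0]=0$ in block form, and match the surviving parameters with~(\ref{schd+1,1}); your conceptual cross-check via the realization of $\ro(d+2,2)$ as conformal Killing fields of $\bbR^{d+1,1}$ (so that $[Z,Z_0]=0$ becomes $\sfL_Z\xi=0$) is a clean shortcut the paper does not invoke. The paper instead first determines the full \emph{group} structure---the matrices~(\ref{A}) subject to the constraints (\ref{Lxi-exi=0})--(\ref{C2+2de=0})---then linearizes to obtain the Lie algebra~(\ref{cZ}), and, rather than matching parameters abstractly, constructs the projective action~(\ref{x'}) of $\Sch(d+1,1)$ on Bargmann space $\bbR^{d+1,1}$ via the last column $X=A_{d+4}$, showing that its infinitesimal form reproduces exactly the vector fields~(\ref{schd+1,1}). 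Your route is leaner for the stated isomorphism; the paper's detour through the group and its action on $\bbR^{d+1,1}$ buys the explicit matrix form~(\ref{A}) and the transformation laws (\ref{x'})--(\ref{r'}), which are reused downstream in the proofs of Propositions~\ref{ProHomSpace} and~\ref{ProConfBargHomSpace}.
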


\begin{proof}
Straightforward computation shows that the stabilizer of $Z_0$ in $\rO(d+2,2)$ consists of matrices of the form
\begin{equation}
A=\pmatrix{
L&a\xi&C\cr
B^*&b&d\cr
-a\xi^*&0&e}
\label{A}
\end{equation}
where $L\in{}\End(\bbR^{d+2})$, $B,C\in\bbR^{d+2}$, and $a,b,d,e\in\bbR$ satisfy
\begin{eqnarray}
\label{Lxi-exi=0}
0&=&L\xi-e\xi\\
0&=&L^*\xi-b\xi\\
\label{L*L=dots}
\bone&=&L^*L-a(\xi B^*+B\xi^*)\\
\label{L*C-adxi+eB=0}
0&=&L^*C-ad\xi+eB\\
\label{1=axi*C+be}
1&=&a\xi^*{}C+be\\
0&=&\xi^*(B+C)\\
\label{C2+2de=0}
0&=&C^*C+2de
\end{eqnarray}
where, again, $L^*$ stands for the $\rg$-adjoint of the linear operator $L$.

In view of (\ref{FlatBargmannStructure}) and (\ref{GramG}), let us put $\xi=e_{d+2}$, where $(e_1,\ldots,e_{d+2})$ is the ``canonical'' basis of $\bbR^{d+2}$; let us complete it in $\bbR^{d+2}\oplus\bbR^2$ with the canonical basis $(e_{d+3},e_{d+4})$ of $\bbR^2$. Define then (with a slight abuse of notation) $A_i=Ae_i$ for all $i=1,\ldots,d+4$, where $A$ is as in~(\ref{A}). 
Upon specifying
\begin{equation}
X=A_{d+4}=\pmatrix{C\cr d\cr e},
\qquad
Y=A_{d+3}=\pmatrix{a\xi\cr b\cr0},
\label{XY}
\end{equation}
we trivially check that 
\begin{equation}
\oX X=\oY Y=\oX Y-1=0
\qquad
\&
\qquad
Z_0Y=0,
\label{RelXY}
\end{equation}
where $\oX=G(X)$ is, as before, the $G$-adjoint of $X\in\bbR^{d+2,2}$.

\goodbreak

The group law of $\Sch(d+1,1)$, plainly given by matrix multiplication using~(\ref{A}), translates as the group action $\Sch(d+1,1)\ni{}A:(X,Y)\mapsto(X',Y')$ given by
\begin{equation}
(X',Y')=(AX,AY)
\label{AXAY}
\end{equation}
on the $(d+4)$-dimensional manifold 
defined by the constraints (\ref{RelXY}).

We then find, using (\ref{A}), that vectors in the Lie algebra of $\Sch(d+1,1)$ are of the form
\begin{equation}
Z=\pmatrix{
\Lambda&\alpha\xi&\Gamma\cr
-\Gamma^*&\chi&0\cr
-\alpha\xi^*&0&-\chi
}
\label{cZ}
\end{equation}
where $\Lambda\in\rso(d+1,1)$, $\Gamma\in\bbR^{d+2}$, and $\alpha,\chi\in\bbR$ are such that $\Lambda\xi+\chi\xi=0$ (see~(\ref{Lxi-exi=0})).

Let us now prove that the Lie algebra of $\Sch(d+1,1)$ is indeed isomorphic to $\sch(d+1,1)$, whose action on flat Bargmann space is given by (\ref{schd+1,1}).

Assuming $e(=\oX{}Q_0)\neq0$, in view of (\ref{1=axi*C+be}), (\ref{C2+2de=0}), and (\ref{XY}) we can write 
\begin{equation}
X
=
\frac{1}{r}
\pmatrix{
x\cr
-\half{}x^*x\cr
1
},
\qquad
Y=
r
\pmatrix{
q\xi\cr
1-q\xi^*{}x\cr
0
}
\label{XYbis}
\end{equation}
where $x=C/e\in\bbR^{d+2}$, $q=ae\in\bbR$, and $r=1/e\in\bbR^*$. 
We deduce from (\ref{AXAY}) that the Schr\"odinger group acts projectively on \textit{Bargmann space} $\bbR^{d+1,1}$ according to $A:x\mapsto{}x'$, viz.,
\begin{equation}
x'=\frac{Lx-\half{}a(x^*x)\xi+C}{e-a\xi^*{}x}
\label{x'}
\end{equation}
where $A\in\Sch(d+1,1)$ is as in (\ref{A}). 
We, likewise, get the transformation law
\begin{equation}
r'=\frac{r}{e-a\xi^*{}x}
\label{r'}
\end{equation}
with the same notation as before. 

\goodbreak

As for the infinitesimal action of the Schr\"odinger group on $\bbR^{d+1,1}$, it can be computed, using (\ref{x'}), by $\delta{x}=\delta{x'}\big\vert_{A=\bone,\delta{A}=Z}$, where $Z$ is as in (\ref{cZ}); we then find 
$
\delta{x}=\Lambda{}x
+
\Gamma
-\half\alpha(x^*x)\xi+\alpha(\xi^*{}x)x+\chi{}x
$,
which exactly matches Equation (\ref{schd+1,1}). Note that we get from (\ref{r'}) $\delta{r}=(\alpha\xi^*{}x+\chi)r$. 

The proof that the Lie algebra of $\Sch(d+1,1)$ is isomorphic to $\sch(d+1,1)$ is complete.
\end{proof}

\begin{pro}
The Schr\"odinger group (\ref{DefSch}) has two connected components, 
\begin{equation}
\pi_0(\Sch(d+1,1))=\bbZ_2.
\label{pi0}
\end{equation}
\end{pro}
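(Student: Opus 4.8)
The plan is to realize $\Sch(d+1,1)$ as the total space of a fibre bundle over $\rO(d)$ with connected fibre, and then to read off $\pi_0$ from the homotopy exact sequence. First I would reinterpret the defining condition $AZ_0=Z_0A$. Using the identification $\ro(d+2,2)\cong\bigwedge^2\bbR^{d+2,2}$ and Lemma \ref{lemmaZ0}, write $Z_0=P_0\wedge Q_0$; then $AZ_0=Z_0A$ means $\Ad_A(Z_0)=Z_0$, i.e. $(AP_0)\wedge(AQ_0)=P_0\wedge Q_0$. Since a nonzero decomposable bivector determines its plane, every $A\in\Sch(d+1,1)$ preserves the isotropic $2$-plane $W=\mathrm{span}(P_0,Q_0)$ and satisfies $\det(A\vert_W)=1$. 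Because $W$ is totally isotropic, $G$ descends to a positive-definite form on $V:=W^\perp/W\cong\bbR^d$, and restriction yields a homomorphism $\rho:\Sch(d+1,1)\to\rO(V)\cong\rO(d)$.

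Next I would check that $\rho$ is surjective with connected kernel. For surjectivity: spatial rotations act block-diagonally, fix $P_0=\xi(=e_{d+2})$ and $Q_0=e_{d+3}$, hence lie in $\Sch(d+1,1)$ and cover $\SO(d)$, while the reflection of a single spatial axis (fixing all remaining basis vectors) is orthogonal, commutes with $Z_0$, and maps under $\rho$ to a reflection of $V$; so $\rho$ is onto $\rO(d)$. For the kernel: an $A\in\ker\rho$ preserves the flag $W\subset W^\perp$, acts trivially on $V$, induces on $W$ an element of $\SL(2,\bbR)$ (by $\det(A\vert_W)=1$) and on $\bbR^{d+2,2}/W^\perp$ its inverse transpose, the residual freedom being unipotent. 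Thus $\ker\rho$ is a semidirect product of a connected subgroup of $\SL(2,\bbR)$ with a unipotent (hence contractible) group, and is therefore connected.

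Finally, since a surjective Lie-group homomorphism is a locally trivial fibration, the bundle $\ker\rho\hookrightarrow\Sch(d+1,1)\stackrel{\rho}{\longrightarrow}\rO(d)$ gives the exact sequence of pointed sets $\pi_0(\ker\rho)\to\pi_0(\Sch(d+1,1))\to\pi_0(\rO(d))\to\ast$. The induced map $\pi_0(\Sch(d+1,1))\to\pi_0(\rO(d))$ is a surjective group homomorphism whose kernel is the image of $\pi_0(\ker\rho)=\ast$, hence trivial; it is therefore an isomorphism, and $\pi_0(\Sch(d+1,1))\cong\pi_0(\rO(d))=\bbZ_2$. I expect the main obstacle to be the connectedness of $\ker\rho$: one must be sure that the constraint $\det(A\vert_W)=1$ really confines the $W$-action to the connected group $\SL(2,\bbR)$, and---what is the genuinely subtle point---that the physically expected time-reversal produces no third component. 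It does not, precisely because the compact $\SO(2)\subset\SL(2,\bbR)$ rotates the two negative directions into one another and so realizes $-\id$ on the negative-definite plane continuously; hence time-reversal already lies inside the connected kernel, leaving the spatial determinant $\det\rho\in\{\pm1\}$ as the only surviving discrete invariant.
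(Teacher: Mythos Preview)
Your argument is correct and takes a genuinely different route from the paper's. The paper passes to a basis with Gram matrix $G'=\diag(\bone_{\bbR^d},D,D)$, $D=\diag(1,-1)$, writes explicit representatives $\{I,P,T,PT\}$ of the four components of $\rO(d+2,2)$, and checks by hand that only $I$ and $P$ commute with the transformed element~$Z_0'$. You instead build the homomorphism $\rho:\Sch(d+1,1)\to\rO(W^\perp/W)\cong\rO(d)$, show it is surjective with connected kernel (an extension of $\SL(W)\cong\SL(2,\bbR)$ by a unipotent group), and read off $\pi_0$ from the fibration. Your approach is more structural and in fact more complete: the paper's computation, taken literally, shows only that the \emph{image} of $\pi_0(\Sch)$ in $\pi_0(\rO(d+2,2))$ has order~$2$, tacitly assuming that $\Sch\cap\rO(d+2,2)_0$ is connected---which is not automatic, since centralizers of nilpotents in connected semisimple groups need not be connected. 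The paper's approach, on the other hand, is shorter and identifies the nontrivial component concretely with the spatial parity~$P$. One minor caveat: your closing heuristic that the compact $\SO(2)\subset\SL(2,\bbR)$ ``rotates the two negative directions into one another'' is not literally accurate (the lift of that $\SO(2)\subset\SL(W)$ to $\rO(d+2,2)$ does not act purely in the negative-definite $2$-plane of~$\bbR^{d+2,2}$), but this does not affect the formal argument, which rests only on the connectedness of $\SL(2,\bbR)$ and of the unipotent radical.
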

\begin{proof}
Let us express the matrix $Z_0'$ of the central element $Z_0$ given by (\ref{Z0}) in a new basis of $\bbR^{d+2,2}$ whose Gram matrix is
\begin{equation}
G'=\pmatrix{
\bone_{
\bbR^d}&0&0\cr
0&D&0\cr
0&0&D
}
\label{G'}
\end{equation}
where $D=\diag(1,-1)$. 
The sought expression is therefore
\begin{equation}
Z_0'=\pmatrix{
0&0&0\cr
0&0&U\cr
0&V&0
}
\label{Z0'}
\end{equation}
where
\begin{equation}
U=\half
\left(\begin{array}{rr}
1&-1\cr
-1&1
\end{array}
\right)
\qquad
\&
\qquad
V=-\half\pmatrix{
1&1\cr
1&1
}.
\label{UV}
\end{equation}
The group $\rO(d+2,2))$ has four connected components, and the generators $\{I,P,T,PT\}$ of $\pi_0(\rO(d+2,2))
=\pi_0(\rO(d+2))\times\pi_0(\rO(2))
\cong\bbZ_2\times\bbZ_2$ can be defined --- up to conjugation --- by
\begin{equation}
I=\pmatrix{
\bone_{
\bbR^d}&0&0\cr
0&\bone_{
\bbR^2}&0\cr
0&0&\bone_{
\bbR^2}
}\!,
\hspace{2mm}
P=\pmatrix{
S&0&0\cr
0&\bone_{
\bbR^2}&0\cr
0&0&\bone_{
\bbR^2}
}\!,
\hspace{2mm}
T=\pmatrix{
\bone_{
\bbR^d}&0&0\cr
0&\bone_{
\bbR^2}&0\cr
0&0&D
}
\label{IPTPT}
\end{equation}
where $S\in\rO(d)$ is such that $S^2=\bone_{\bbR^d}$ and $\det(S)=-1$.

\goodbreak

It is a trivial matter to check that the only non-zero commutators are $[T,Z_0']$ and $[PT,Z_0']$, proving, via the definition (\ref{DefSch}) of the Schr\"odinger group, that, indeed, $\pi_0(\Sch(d+1,1))$ is generated by $I$, and $P$.
\end{proof}

\subsection{A nilpotent coadjoint orbit of the conformal group}\label{CoadSection}

We highlight that the (non-relativistic) Schr\"odinger group is, interestingly, as\-sociated with a special homogeneous \textit{symplectic} manifold of the (relativistic) conformal group.

As we have seen in Proposition \ref{DefSchStabPro}, the Schr\"odinger group, $\Sch(d+1,1)$, is the stabilizer of~$Z_0\in\so(d+2,2)$, given by (\ref{Z0}), for the adjoint action of $\rO(d+2,2)$. The (co)adjoint orbit 
\begin{equation}
\medbox{
\cO_{Z_0}=\rO(d+2,2)/\Sch(d+1,1)
}
\label{cO}
\end{equation}
is therefore a $2(d+1)$-dimensional \textit{symplectic manifold} we now describe as follows.

Consider the left-invariant Maurer-Cartan $1$-form $\Theta=A^{-1}dA$, and the $1$-form $\varpi=-\half\Tr(Z_0\Theta)$ of $\rO(d+2,2)$. A classical result tells us that $d\varpi$ descends to $\cO_{Z_0}$ as the canonical Kirillov-Kostant-Souriau symplectic $2$-form, $\omega$, of~$\cO_{Z_0}$, viz., $d\varpi=(\rO(d+2,2)\to\cO_{Z_0})^*\omega$. Indeed, let us put again $\xi=e_{d+2}$, and $A_i=Ae_i$ for $i=1,\ldots{}d+4$ whenever $A\in\rO(d+2,2)$; with the help of (\ref{Z0}) we get $\varpi=\oP{}dQ$, where $P=A_{d+2}$, and $Q=A_{d+3}$ are nonzero, and such that $\oP{}P=\oP{}Q=\oQ{}Q=0$.
The $2$-form
\begin{equation}
d\varpi=d\oP\wedge{}dQ\label{dvarpi}
\end{equation}
clearly descends to the slit null tangent bundle, $NT\cQ\!\setminus\!\cQ$, of the null quadric 
\begin{equation}
\cQ=\{Q\in\bbR^{d+2,2}\setminus\!\{0\}\,\vert\,\oQ{}Q=0\}.
\label{Q}
\end{equation}
It defines the sought symplectic structure, $\omega$, on 
\begin{equation}
\cO_{Z_0}=(NT\cQ\!\setminus\!\cQ)/\SL(2,\bbR)
\label{cObis}
\end{equation}
interpreted as the \textit{manifold of null geodesics of conformally compactified Minkowski spacetime} $\bbP\cQ=\Ein_{d+1,1}$; see (\ref{Einsn-1,1}) and (\ref{Ein}). (The leaves of the distribution $\ker(d\varpi)$ of $NT\cQ\!\setminus\!\cQ$ project to $\bbP\cQ$ as the null geodesics of its conformally flat structure.)
Note that $\cO_{Z_0}=\cO^+_{Z_0}\cup\cO^-_{Z_0}$ with $\cO^\pm_{Z_0}\cong{}TS^{d+1}\!\setminus{}S^{d+1}$, topologically \cite{JMS,GS,DET}.

\section{Homogeneous Schr\"odinger manifolds}
\label{HomSchManifSection}

We are now led to the following query: 
what $\Sch(d+1,1)$-homogeneous space would host a \textit{genuine}, well-behaved, Lorentz metric whose isometries constitute the whole Schr\"odinger group (\ref{DefSch})?

Let us first consider the distinguished element $Z_0\in\so(d+2,2)$ represented as in (\ref{Z0}) and the associated vector field $\delta_{Z_0}:Q\mapsto{}Z_0Q$ on the quadric 
\begin{equation}
\AdS_{d+3}(\sqrt{-2\lambda})=\{Q\in\bbR^{d+2,2}\,\vert\,\oQ Q=2\lambda\}
\label{AdSd+3}
\end{equation}
with a given $\lambda<0$ (see (\ref{AdSn+1})).
\begin{lem}\label{Lem-deltaZ0neq0}
The vector field $\delta_{Z_0}$ of $\AdS_{d+3}(\sqrt{-2\lambda})$ nowhere vanishes.
\end{lem}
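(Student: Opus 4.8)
The plan is to recast the vanishing locus of $\delta_{Z_0}$ as the kernel of $Z_0$ viewed as a linear endomorphism of $\bbR^{d+2,2}$, and then to show that this kernel cannot meet the hyperboloid. Since $\delta_{Z_0}(Q)=Z_0Q$ vanishes at $Q$ exactly when $Q\in\ker Z_0$, the statement is equivalent to the assertion that $G$ restricted to $\ker Z_0$ never takes the negative value $2\lambda$. So the whole proof reduces to a signature computation.

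First I would pin down $\ker Z_0$ explicitly. Using Lemma \ref{lemmaZ0} with the chosen origin $P_0=e_{d+2}=\xi$, $Q_0=e_{d+3}$, the identification $\ro(d+2,2)\cong\bigwedge^2\bbR^{d+2,2}$ gives the action $Z_0Q=(P_0\wedge Q_0)Q=G(Q_0,Q)\,P_0-G(P_0,Q)\,Q_0$. As $P_0$ and $Q_0$ are linearly independent, this vanishes iff $G(P_0,Q)=G(Q_0,Q)=0$, so that $\ker Z_0=\{P_0,Q_0\}^\perp$ is a subspace of codimension two. (Equivalently, reading off the matrix (\ref{Z0}) directly, $Z_0Q=Q^{d+4}e_{d+2}-Q^{d+1}e_{d+3}$, whence $\ker Z_0=\{Q\mid Q^{d+1}=Q^{d+4}=0\}$.)

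The crux is then to check that $G$ is positive semi-definite on $\ker Z_0=\{P_0,Q_0\}^\perp$. By Lemma \ref{lemmaZ0}, $P_0,Q_0$ span a two-dimensional totally isotropic plane $W$ (both $G$-null and mutually $G$-orthogonal). Choosing a complementary isotropic plane $W'$ with $W\oplus W'$ nondegenerate of signature $(2,2)$, one obtains an orthogonal splitting $\bbR^{d+2,2}=(W\oplus W')\oplus(W\oplus W')^\perp$ in which $(W\oplus W')^\perp$ is positive definite of dimension $d$, and in which $\{P_0,Q_0\}^\perp=W\oplus(W\oplus W')^\perp$. On this last subspace $W$ is precisely the radical of $G$, while the complement is positive definite, so $G\big\vert_{\ker Z_0}\ge0$. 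Concretely this is transparent from (\ref{G}): imposing $Q^{d+1}=Q^{d+4}=0$ deletes exactly the two hyperbolic cross-terms $2Q^{d+1}Q^{d+2}$ and $2Q^{d+3}Q^{d+4}$, leaving $\overline{Q}Q=\sum_{i=1}^d(Q^i)^2\ge0$ on $\ker Z_0$.

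Finally I conclude: any $Q\in\AdS_{d+3}(\sqrt{-2\lambda})$ satisfies $\overline{Q}Q=2\lambda<0$, which is incompatible with $G\big\vert_{\ker Z_0}\ge0$; hence $Q\notin\ker Z_0$ and $\delta_{Z_0}(Q)=Z_0Q\neq0$. The only point that demands a touch of care is the semi-definiteness of $G$ on $\ker Z_0$, i.e. confirming that passing to $\{P_0,Q_0\}^\perp$ removes both negative directions of $G$; once the explicit coordinates of (\ref{G}) are used this is immediate, so I do not anticipate any genuine obstacle. The hypothesis $\lambda<0$ is exactly what drives the contradiction, and the argument would plainly fail for $\lambda\ge0$, as expected.
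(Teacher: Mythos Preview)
Your proof is correct and follows essentially the same route as the paper: both identify the vanishing locus of $\delta_{Z_0}$ as $\ker Z_0$, compute it in coordinates, and derive a contradiction with $\oQ Q=2\lambda<0$ from the non-negativity of $G$ on that kernel. The only cosmetic difference is that the paper phrases the final step as ``a vector in $\bbR^{d+1,1}$ orthogonal to the null vector $\xi$ cannot be timelike'', whereas you write out the equivalent inequality $\sum_{i=1}^d (Q^i)^2\ge 0$ directly.
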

\begin{proof}
In view of (\ref{Z0}), we find
\begin{equation}
\delta_{Z_0}:
\pmatrix{
x\cr
\alpha\cr
\beta
}
\mapsto
\pmatrix{
\beta\xi\cr
-\xi^*x\cr
0
}
\label{deltaZ0}
\end{equation}
where $x\in\bbR^{d+1,1}$, and $\alpha,\beta\in\bbR$ are such that $\oQ{}Q=x^*x+2\alpha\beta=2\lambda$, and where the metric (\ref{G}) has been used. Suppose, for the moment, that $\delta_{Z_0}Q=0$ for some $Q\in\AdS_{d+3}(\sqrt{-2\lambda})$, i.e., that $\beta=0$, and $\xi^*x=0$. We readily get $x^*x=2\lambda<0$. We hence find that $x\in\bbR^{d+1,1}$ is at the same time $\rg$-orthogonal to the null vector $\xi\neq0$, and timelike: contradiction! Thus, $\delta_{Z_0}Q\neq0$ for all $Q\in\AdS_{d+3}(\sqrt{-2\lambda})$.
\end{proof}

\subsection{A special family of Schr\"odinger-homogeneous spaces}

Let us resort to the definition (\ref{XY}) of the vectors $X,Y\in\cQ$ (the last two column vectors of the Schr\"odinger matrix (\ref{A})), and posit 
\begin{equation}
\medbox{
Q{}=X+\lambda Y
}
\label{Qlambda}
\end{equation}
where $\lambda\in\bbR^*$ is fixed. 

We contend, and will prove right below, that the set 
\begin{equation}
\medbox{
\hM_\lambda=\{X+\lambda{}Y\in\bbR^{d+2,2}\,\vert\,\oX X=\oY Y=\oX Y-1=0, Z_0Y=0
\}
}
\label{hatM}
\end{equation}
of these $Q{}$, with $\lambda<0$, is actually a homogeneous manifold of the Schr\"odinger group, and an open submanifold
$\hM_\lambda\subset\AdS_{d+3}(\sqrt{-2\lambda})$. 

\goodbreak

\begin{pro}\label{ProHomSpace}
For every $\lambda<0$, the manifold (\ref{hatM}) is a connected, $(d+3)$-dimensional, homogeneous space of the Schr\"odinger group, viz., 
\begin{equation}
\medbox{
\hM_\lambda\cong\Sch(d+1,1)/(\rE(d)\times\bbR)
}
\label{Mlambda0Bis}
\end{equation}
where $\rE(d)=\rO(d)\ltimes\bbR^d$ is the Euclidean group of $\bbR^d$. These manifolds have topology
\begin{equation}
\medbox{
\hM_\lambda\cong(\bbR^{d+2}\setminus\{0\})\times{}S^1.
}
\label{Mlambda0Ter}
\end{equation}
\end{pro}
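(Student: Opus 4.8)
The plan is to recognize $\hM_\lambda$ as a single $\Sch(d+1,1)$-orbit and then read off, in turn, the isotropy subgroup and the global topology. Put $Q_*=e_{d+4}+\lambda e_{d+3}$, the value of $X+\lambda Y$ at $A=\bone$. The relations (\ref{RelXY}) show that the last two columns $(A_{d+4},A_{d+3})$ of any $A\in\Sch(d+1,1)$ satisfy precisely the constraints appearing in (\ref{hatM}), so that $\hM_\lambda=\{A\,Q_*:A\in\Sch(d+1,1)\}=\Sch(d+1,1)\cdot Q_*$. First I would compute $\oQ_* Q_*=\oX X+2\lambda\,\oX Y+\lambda^2\,\oY Y=2\lambda$; since $\Sch(d+1,1)\subset\rO(d+2,2)$ preserves $G$, the whole orbit lies in $\AdS_{d+3}(\sqrt{-2\lambda})$, and transitivity of the action $Q=X+\lambda Y\mapsto AX+\lambda AY=AQ$ on $\hM_\lambda$ is then tautological. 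Connectedness comes for free: the spatial reflection $P$ generating the nontrivial class of $\pi_0(\Sch(d+1,1))=\bbZ_2$ fixes $Q_*$ (whose only nonzero entries sit in the two reflection-invariant last coordinates), so $\hM_\lambda=\Sch_0(d+1,1)\cdot Q_*$ is connected.

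Next I would determine the isotropy group by imposing $A\,Q_*=Q_*$ on the general Schr\"odinger matrix (\ref{A}). Writing $A\,Q_*=(C+\lambda a\xi,\ d+\lambda b,\ e)$ and setting it equal to $Q_*=(0,\lambda,1)$ forces $e=1$, and then (\ref{1=axi*C+be}) gives $b=1$, $d=0$, $C=-\lambda a\xi$, $B=\lambda a\xi$, leaving $L\in\End(\bbR^{d+2})$ subject to $L\xi=\xi$, $L^*\xi=\xi$ and $L^*L=\bone+2\lambda a^2\,\xi\xi^*$, with $a\in\bbR$ free. The slice $a=0$ recovers exactly $\{L\in\rO(d+1,1):L\xi=\xi\}$, the stabilizer of the null vector $\xi$ in the Lorentz group, which is isomorphic to $\rE(d)=\rO(d)\ltimes\bbR^d$; the parameter $a$ sweeps out a complementary one-parameter subgroup, and a short check that it centralizes the $\rE(d)$-factor gives $\Sch(d+1,1)_{Q_*}\cong\rE(d)\times\bbR$. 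Hence $\hM_\lambda\cong\Sch(d+1,1)/(\rE(d)\times\bbR)$, of dimension $\half(d^2+3d+8)-\half(d^2+d+2)=d+3$; having the same dimension as $\AdS_{d+3}$, the orbit is an open submanifold of the quadric.

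For the topology I would identify $\hM_\lambda$ as the complement in $\AdS_{d+3}(\sqrt{-2\lambda})$ of the locus on which the orbit map fails to be onto. A point $Q$ of the quadric lies in the orbit iff a presentation $Q=X+\lambda Y$ can be \emph{completed} to a full matrix (\ref{A}); solving the completion equations (\ref{Lxi-exi=0})--(\ref{C2+2de=0}) reduces, after eliminating $a,b,d,e$, to the solvability of $L^*L=M$ for a prescribed $\rg$-self-adjoint $M$, i.e.\ to an inertia condition on $\rg M$. I expect the excluded set to sit inside the hypersurface $\{\,\overline{Q}e_{d+3}=0\,\}$ and to form a single closed $\Sch(d+1,1)$-orbit diffeomorphic to $S^1$ that is isotopic to the core circle of $\AdS_{d+3}\cong\bbR^{d+2}\times S^1$ (cf.\ (\ref{AdS})). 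Excising such a core circle from $\bbR^{d+2}\times S^1$ yields $(\bbR^{d+2}\setminus\{0\})\times S^1$, which is the asserted diffeomorphism type.

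The column-bookkeeping for the isotropy group and the dimension count are routine; the genuine obstacle is this last step, because the diffeomorphism type of $\AdS_{d+3}$ with a circle removed depends on the isotopy class of that circle, so it is not enough to know merely that the excluded set is \emph{some} embedded $S^1$. I would overcome this either by introducing explicit global coordinates on $\AdS_{d+3}$ adapted to the Witt decomposition $\bbR^{d+2,2}=W\oplus U\oplus W'$, with $W=\mathrm{span}(\xi,e_{d+3})$ the $\Sch(d+1,1)$-invariant null plane and $U$ Euclidean, in which the orbit acquires a transparent radial description, or by a direct deformation-retraction of $\hM_\lambda$ onto the unit normal $S^{d+1}$-bundle of the removed circle, which is $S^{d+1}\times S^1\simeq(\bbR^{d+2}\setminus\{0\})\times S^1$.
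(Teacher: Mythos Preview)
Your computation of the isotropy subgroup at $Q_*=e_{d+4}+\lambda e_{d+3}$ is essentially the paper's own: the authors choose the same base point (written there as $Q_0$ with $\hat x=0$, $r=1$) and solve $AQ_0=Q_0$ using the constraints (\ref{Lxi-exi=0})--(\ref{C2+2de=0}). They do not invoke the abstract fact that the Lorentz stabilizer of a null vector is $\rE(d)$; instead they write $L$ in the block form (\ref{L}) and read off the parametrization $(R,u,a)\in\rO(d)\times\bbR^d\times\bbR$, which makes the direct-product structure $\rE(d)\times\bbR$ and the dimension count explicit. Your shortcut is legitimate, but you should still verify that the one-parameter family in $a$ really commutes with the $\rE(d)$-part --- you assert this but do not check it.

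Where your proof genuinely diverges from the paper is the topology. The paper does \emph{not} analyse the complement $\AdS_{d+3}\setminus\hM_\lambda$ at all. Instead, it passes to the diagonal-type frame with Gram matrix (\ref{G'}), in which $Z'_0$ takes the explicit form (\ref{Z0'}), solves $Z_0Y=0$ and $\oY Y=0$ to get $Y=(0,a',-a',b',b')^t$, and then observes that the scaling $(X,Y)\mapsto(\alpha X,\alpha^{-1}Y)$ preserves $\hM_\lambda$. Counting: the pair $(X,Y)$ carries $d+6$ parameters, the dilation and the two constraints $\oX X=0$, $\oX Y=1$ remove three, and Equation~(\ref{XY=1}) can be solved for $a'$ in terms of the $X$-data. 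What remains is precisely the condition $\oX X=0$, so $X$ ranges over the null cone $\cQ\cong(\bbR^{d+2}\setminus\{0\})\times S^1$, and $\hM_\lambda$ inherits that topology. This parametric argument never needs to locate the excluded set inside $\AdS_{d+3}$ or worry about its isotopy class.

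By contrast, your complement approach has a real gap that you yourself flag: knowing only that the excluded locus is \emph{some} embedded circle in $\bbR^{d+2}\times S^1$ does not determine the diffeomorphism type of its complement. Your two proposed fixes (an adapted Witt frame, or a retraction onto the normal sphere bundle) are plausible but neither is carried out; moreover you have not actually established that the complement \emph{is} a single $S^1$ --- you only ``expect'' it to lie in $\{\oQ\,e_{d+3}=0\}$, and the reduction to an inertia condition on $\rg M$ is left unspecified. The paper's direct parametrization in the frame $G'$ is both shorter and sidesteps the isotopy problem entirely; I would recommend switching to it.
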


\begin{proof}
From the very definition (\ref{hatM}), each manifold $\hM_\lambda$ is the image of the surjection $\pi_\lambda:\Sch(d+1,1)\to\hM_\lambda$ given by $\pi_\lambda(A)=\lambda{}A_{d+3}+A_{d+4}$. The left-action of the Schr\"odinger group clearly passes to the quotient according to (\ref{AXAY}), and $\hM_\lambda$ is therefore diffeomorphic to a homogeneous space $\Sch(d+1,1)/K$. Let us prove that $K\cong\rE(d)\times\bbR$.

The coordinate system chosen in (\ref{XYbis}) provides us with the local expression
\begin{equation}
Q{}
=
\frac{1}{r}
\pmatrix{
\hx\cr
-\half{}\hx^*\,\hx+\lambda r^2\cr
1
}
\label{localQ}
\end{equation}
where 
\begin{equation}
\hx=x+\lambda r^2 q\xi\in\bbR^{d+2}
\label{hx}
\end{equation}
and $r\neq0$. 
Consider now the ``origin'', $Q{}_0$,
defined by $\hx=0$ and $r=1$ in (\ref{localQ}). Then look for the subgroup, $K$, 
of all $A\in\Sch(d+1,1)$ such that $AQ{}_0=Q{}_0$. In view of (\ref{A}) we readily find $C=-\lambda a \xi$, $d=\lambda(1-b)$, and $e=1$. Moreover, the constraints (\ref{Lxi-exi=0})--(\ref{C2+2de=0}) yield $L\xi=L^*\xi=\xi$, $B=-C=\lambda a \xi$, $b=1$, $d=0$, and $L^*L=\bone+2\lambda{}a^2\xi\xi^*$. The equations $L\xi=L^*\xi=\xi$ help us write
\begin{equation}
L=\pmatrix{
R&u&0\cr
0&1&0\cr
v&w&1}
\label{L}
\end{equation}
where $R\in{}\End(\bbR^d)$, $u\in\bbR^d$, $v\in(\bbR^d)^*$, and $w\in\bbR$. At last, the extra constraint $L^*L=\bone+2\lambda{}a^2\xi\xi^*$ entails that $R^tR=\bone$, $v=-u^tR$ (where the superscript ``$t$'' stands for transposition), and $w=-\half{}u^tu+\lambda{}a^2$. The isotropy group $K\subset\Sch(d+1,1)$ of $Q{}_0$ is therefore parametrized by the triples $(R,u,a)\in{}\rO(d)\times\bbR^d\times\bbR$, and easily found to be isomorphic to the direct product $K\cong\rE(d)\times\bbR$. Since $\dim(K)=\half{}d(d+1)+1$, we indeed get $\dim(\hM_\lambda)=\half(d^2+3d+8)-\half(d^2+d+2)=d+3$.

We now work out the topology of our $\Sch(d+1,1)$-homogeneous space $\hM_\lambda$ given by (\ref{hatM}). We will suitably use a frame of $\bbR^{d+2,2}$ with Gram matrix (\ref{G'}) where the distinguished element $Z_0\in\ro(d+2,2)$ is represented by the matrix $Z'_0$ in~(\ref{Z0'}).

Let us write the components of the $Q{}=X+\lambda Y$, defined by (\ref{Qlambda}), in this frame. Solving the equations in (\ref{hatM}) for $Y$, namely $\oY Y=0$, and $Z_0Y=0$, we get
$$
Y=\pmatrix{
0\cr
a'\cr
-a'\cr
b'\cr
b'
}\in\bbR^{d+2,2}\setminus\!\{0\}
$$
with $a',b'\in\bbR$ (and $a'^2+b'^2>0$). 

As for the remaining equations satisfied by
$$
X=\pmatrix{
x\cr
a\cr
u\cr
b\cr
v
}\in\bbR^{d+2,2}\setminus\!\{0\}
$$
with $x\in\bbR^d$, and $a,b,u,v\in\bbR$, we obtain
\begin{eqnarray}
\label{X2=0}
\oX X=0 &\Longleftrightarrow& x^tx+a^2+b^2=u^2+v^2>0\\
\label{XY=1}
\oX Y=1 &\Longleftrightarrow& a'(u+a)+b'(b-v)=1.
\end{eqnarray}
Noticing that the dilations $(X,Y)\mapsto(\alpha X,\alpha^{-1}Y)$ with $\alpha\in\bbR^*$ do preserve $\hM_\lambda$, we claim that the latter dilation invariance and the  conditions (\ref{X2=0}) and~(\ref{XY=1}) leave us with $d+6-3=d+3$ free parameters, e.g., $x,a,b,u,v,a'$. Then, Equation~(\ref{XY=1}) yields $a'$ as a function of $a,b,u,v$. The only remaining constraint on $X$ is therefore given by Equation (\ref{X2=0}). This entails that $X\in\cQ$, hence~$\hM_\lambda$ has the same topology as $\cQ\cong(\bbR^{d+2}\setminus\!\{0\})\times{}S^1$, and is thus connected. 
\end{proof}


\subsection{Distinguished Schr\"odinger-invariant structures}\label{LorentzStrSubSec}

With these preparations, we are ready to introduce Schr\"odinger-invariant tensors on~$\hM_\lambda$.

Denote by $\hrg_\lambda=(\hM_\lambda\hookrightarrow\bbR^{d+4})^*G$ the induced symmetric tensor on $\hM_\lambda$, viz.,
\begin{equation}
\medbox{
\hrg_\lambda(\delta Q,\delta'Q)=\delta\oQ\,\delta'Q
}
\label{hg}
\end{equation}
for all $\delta Q,\delta'Q\in{}T_Q\hM_\lambda$. This tensor, $\hrg_\lambda$, is clearly $\Sch(d+1,1)$-invariant. In view of~(\ref{DefSch}), the same remains true for the one-form $\htheta$ of $\hM_\lambda$ defined by
\begin{equation}
\medbox{
\htheta(\delta{Q})=-\oQ\,Z_0\,\delta{Q}
}
\label{htheta}
\end{equation}
for all $\delta{Q}\in{}T_Q\hM_\lambda$.

We easily find that $d\htheta(\delta{Q},\delta'{Q})=-2\delta\oQ Z_0\delta'{Q}$. By means of the fact that $Z_0$ has rank $2$ (as clear from Lemma \ref{lemmaZ0} stating that $Z_0=P_0\wedge{}Q_0$ where $P_0$ and $Q_0$ span a totally null plane in $\bbR^{d+2,2}$), and by some straightforward computation, we get
\begin{equation}
\htheta\wedge{}d\htheta=0.
\label{thetadtheta=0}
\end{equation}


\begin{rmk}
{\rm
\textit{Local} expressions for (\ref{hg}) and (\ref{htheta}) are easily deduced from~(\ref{localQ}); we get 
$\hrg_\lambda(\delta Q,\delta'Q)=r^{-2}\left[\rg(\delta\hx,\delta'\hx) -2\lambda\,\delta r\delta'r\right]$
or, alternatively,
\begin{equation}
\hrg_\lambda=\frac{1}{r^2}\left[\sum_{i,j=1}^{d+2}\rg_{ij}\,d\hx^i\otimes{}d\hx^j-2\lambda\,dr\otimes{}dr\right]
\label{localhgBis}
\end{equation}
together with
\begin{equation}
\htheta=\frac{\theta}{r^2}
\label{hthetaBis}
\end{equation}
where $\theta=\sum_{i=1}^{d+2}\rg_{ij}\,\xi^i d\hx^j(=dt)$ is the Galilei clock of the flat Bargmann structure.
The metric (\ref{localhgBis}) is the well-known expression of the $\AdS_{d+3}(\sqrt{-2\lambda})$ metric in Poincar\'e coordinates; see, e.g., \cite{AGMOO}.
}
\end{rmk}

\begin{thm}\label{mainThm0}
For every $\lambda<0$, the manifold $\hM_\lambda$ admits a family of Lorentz metrics
\begin{equation}
\medbox{
\hrg_{\lambda,\mu}=\hrg_\lambda-\mu\,\htheta\otimes\htheta
}
\label{hgGeneral}
\end{equation}
given by (\ref{hg}) and (\ref{htheta}), parametrized by $\mu\in\bbR$. The Schr\"odinger group is the group of isometries of $(\hM_\lambda,\hrg_{\lambda,\mu})$.
\end{thm}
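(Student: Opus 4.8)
The plan is to separate the statement into three claims: that each $\hrg_{\lambda,\mu}$ is Lorentzian with $\hxi$ null and Killing; that $\Sch(d+1,1)$ acts by isometries; and that there are no further isometries. First I would reduce the metric claim to Proposition \ref{gmuPro}. The key is to identify $\hxi$ with the fundamental vector field $\delta_{Z_0}:Q\mapsto Z_0Q$ of the generator $Z_0$ on the quadric $\AdS_{d+3}(\sqrt{-2\lambda})$ of (\ref{AdSd+3}); this field is nowhere zero by Lemma \ref{Lem-deltaZ0neq0}. Since $Z_0\in\ro(d+2,2)$ is $G$-skew, definitions (\ref{hg}) and (\ref{htheta}) give, in one line, $\hrg_\lambda(\delta_{Z_0},\delta Q)=\overline{Z_0Q}\,\delta Q=-\oQ Z_0\,\delta Q=\htheta(\delta Q)$, so $\htheta=\hrg_\lambda(\hxi)$ and $\htheta(\hxi)=-\oQ Z_0^2 Q=0$ because $Z_0^2=0$; thus $\hxi$ is $\hrg_\lambda$-null, and it is Killing as the restriction of the flow $e^{tZ_0}\subset\rO(d+2,2)$, which preserves $G$. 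Proposition \ref{gmuPro}, applied with $\mu$ replaced by $-\mu$ (the sign being immaterial), then immediately yields that $\hrg_{\lambda,\mu}$ is Lorentzian for all $\mu$ and that $\hxi$ remains null, nowhere vanishing and Killing.

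The inclusion $\Sch(d+1,1)\subseteq\mathrm{Isom}(\hM_\lambda,\hrg_{\lambda,\mu})$ is immediate: any $A\in\Sch(d+1,1)\subset\rO(d+2,2)$ preserves $G$, hence $\hrg_\lambda$; by (\ref{DefSch}) it commutes with $Z_0$, hence preserves $\hxi=\delta_{Z_0}$ and the invariant form $\htheta$, and therefore $\hrg_{\lambda,\mu}$; and it preserves $\hM_\lambda$ by the homogeneity of Proposition \ref{ProHomSpace}.

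The substance is the reverse inclusion, where I would use the deformation to recover $\hrg_\lambda$ and $\hxi$ intrinsically. Because $\hrg_\lambda$ is the Einstein (constant-curvature) $\AdS$ metric and $\hxi$ is null and Killing for it, subtracting $\mu\,\htheta\otimes\htheta$ is a Kerr--Schild/pp-wave deformation along the geodesic null field $\hxi$, for which the Ricci tensor changes only by a multiple of $\htheta\otimes\htheta$. I would compute, e.g.\ in the Poincar\'e chart (\ref{localhgBis})--(\ref{hthetaBis}), that $\Ric(\hrg_{\lambda,\mu})-\kappa\,\hrg_{\lambda,\mu}=c_\mu\,\htheta\otimes\htheta$, with $\kappa$ the Einstein constant and $c_\mu\neq0$ whenever $\mu\neq0$. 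Since the left-hand side is isometry-invariant, every isometry $\phi$ obeys $\phi^*(\htheta\otimes\htheta)=\htheta\otimes\htheta$, and hence $\phi^*\hrg_\lambda=\phi^*(\hrg_{\lambda,\mu}+\mu\,\htheta\otimes\htheta)=\hrg_\lambda$. Thus $\phi$ is an isometry of the $\AdS$ metric on the connected open set $\hM_\lambda\subset\AdS_{d+3}(\sqrt{-2\lambda})$; by rigidity of constant-curvature metrics it is the restriction of some $A\in\rO(d+2,2)$ preserving $\hM_\lambda$, and $\phi^*(\htheta\otimes\htheta)=\htheta\otimes\htheta$ forces $AZ_0A^{-1}=\pm Z_0$. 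The plus sign is exactly the defining relation (\ref{DefSch}) of $\Sch(d+1,1)$.

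The hard part is twofold. The curvature computation giving $c_\mu\neq0$ is the technical core, and it is precisely what makes $\mu\neq0$ essential: at $\mu=0$ the metric is genuinely $\AdS$, and the rigidity step then returns the entire stabilizer of $\hM_\lambda$ in $\rO(d+2,2)$, which is strictly larger than $\Sch(d+1,1)$ (it contains, for instance, boosts scaling the totally null plane $Z_0=P_0\wedge Q_0$ of Lemma \ref{lemmaZ0}). The second difficulty is the discrete bookkeeping: the sign-reversing elements $AZ_0A^{-1}=-Z_0$ still preserve $\htheta\otimes\htheta$ and normalize $\Sch(d+1,1)$, so they must be ruled out --- by examining their action on the plane $P_0\wedge Q_0$ together with their orientation and time-orientation class --- or else the conclusion should be read as identifying the maximal connected group of isometries. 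I would also make the extension-to-$\rO(d+2,2)$ step precise via completeness and analyticity of the hyperboloid $\AdS_{d+3}(\sqrt{-2\lambda})$; alternatively, the whole converse can be done infinitesimally by solving the Killing equations for $\hrg_{\lambda,\mu}$ in the chart (\ref{localhgBis})--(\ref{hthetaBis}) and matching the solution space with $\sch(d+1,1)$, which settles the identity component and sidesteps the global extension while leaving the discrete analysis to be handled separately.
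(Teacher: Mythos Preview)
Your argument is correct and is genuinely different from --- and in fact more scrupulous than --- the paper's own proof. The paper disposes of the theorem in three lines: it first treats $\mu=0$, asserting that any isometry of $(\hM_\lambda,\hrg_\lambda)$ is ``by construction'' the restriction of an element of $\rO(d+2,2)$ which, having to preserve the defining constraint $Z_0Y=0$ of~(\ref{hatM}), must lie in the stabilizer of $Z_0$, i.e., in $\Sch(d+1,1)$; it then simply notes that the extra term $-\mu\,\htheta\otimes\htheta$ is $\Sch(d+1,1)$-invariant and invokes Proposition~\ref{gmuPro}. Thus the paper never establishes the reverse inclusion $\mathrm{Isom}(\hM_\lambda,\hrg_{\lambda,\mu})\subseteq\Sch(d+1,1)$ for $\mu\neq0$, and it elides exactly the two points you isolate: the extension of an isometry of an open piece of $\AdS$ to an element of $\rO(d+2,2)$, and the passage from ``preserves $\hM_\lambda$'' to ``commutes with $Z_0$'' rather than merely normalizes it. Your route through the Ricci identity --- which the paper in fact records later as Equation~(\ref{TracefreeEinstein}) but does not exploit in this proof --- recovers $\htheta\otimes\htheta$, and hence $\hrg_\lambda$, intrinsically from $\hrg_{\lambda,\mu}$ whenever $\mu\neq0$, thereby supplying the missing converse cleanly. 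Your caveat that the $\mu=0$ statement is strictly weaker (the setwise stabilizer of $\hM_\lambda$ in $\rO(d+2,2)$ contains dilations of the null $2$-plane $P_0\wedge Q_0$ and is larger than $\Sch(d+1,1)$), and that the sign-reversing case $AZ_0A^{-1}=-Z_0$ must be excluded separately, are both well taken; the paper's formulation is best read modulo these discrete subtleties.
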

\begin{proof}
The signature of the metrics $\hrg_{\lambda}$ and $\hrg_{\lambda,\mu}$ is clearly Lorentzian since $\lambda<0$. 
Then the group of isometries of $(\hM_\lambda,\hrg_{\lambda})$ is, by construction, a subgroup of the group $\rO(d+2,2)$ of isometries of $\AdS_{d+3}(\sqrt{-2\lambda})$, which furthermore preserves the constraint $Z_0Y=0$ in (\ref{hatM}). It is thus the stabilizer of $Z_0$ in $\rO(d+2,2)$, i.e., the Schr\"odinger group $\Sch(d+1,1)$ in view of (\ref{DefSch}). The extra term, $-\mu\,\htheta\otimes\htheta$, in~(\ref{hgGeneral}) being $\Sch(d+1,1)$-invariant,  Proposition \ref{gmuPro} helps us complete the proof.
\end{proof}

\begin{rmk}
{\rm
The expression (\ref{hgGeneral}) is --- up to an overall multiplicative constant factor --- the most general twice-symmetric tensor constructed by means of the only data at our disposal, namely the ``ambient'' metric $G$ given by (\ref{G}), and the central element $Z_0\in\sch(d+1,1)$ defined in (\ref{Z0}).  
}
\end{rmk}

\begin{rmk}
{\rm
In view of Proposition \ref{ProHomSpace}, the manifold (\ref{hatM}) is $(d+3)$-dimensional, it is thus an open submanifold $\hM_\lambda\subset\AdS_{d+3}(\sqrt{-2\lambda})$.
}
\end{rmk}

There exists a privileged vector field on $\hM_\lambda$, namely
\begin{equation}
\medbox{
\hxi:Q\mapsto
\delta_{Z_0}Q=Z_0Q
}
\label{deltaZ0Q}
\end{equation}
where $Z_0\in\ro(d+2,2)$ is defined by (\ref{Z0}). 
\begin{pro}\label{ProhxiKilling}
The vector field $\hxi$ defined by (\ref{deltaZ0Q}) is a nowhere vanishing, lightlike, Killing vector field of $(\hM_\lambda,\hrg_{\lambda,\mu})$.
\end{pro}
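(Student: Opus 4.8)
The plan is to verify the three asserted properties first for the induced metric $\hrg_\lambda$ of (\ref{hg}), and then to transfer them to the whole family $\hrg_{\lambda,\mu}$ of (\ref{hgGeneral}) by invoking Proposition \ref{gmuPro}. The observation that makes this reduction work is the identity $\htheta=\hrg_\lambda(\hxi)$, which I would establish at the outset: for any $\delta Q\in T_Q\hM_\lambda$, the $G$-antisymmetry of $Z_0\in\ro(d+2,2)$ together with (\ref{hg}) and (\ref{htheta}) gives
\[
\hrg_\lambda(\hxi,\delta Q)=G(Z_0Q,\delta Q)=-G(Q,Z_0\delta Q)=-\oQ\,Z_0\,\delta Q=\htheta(\delta Q).
\]
Consequently $\hrg_{\lambda,\mu}=\hrg_\lambda-\mu\,\htheta\otimes\htheta$ is precisely the family considered in Proposition \ref{gmuPro} (with $\mu$ replaced by $-\mu$), so it suffices to check that $\hxi$ is nowhere vanishing, lightlike and Killing for the bare metric $\hrg_\lambda$.

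For the bare metric the three points are short. Nowhere vanishing is Lemma \ref{Lem-deltaZ0neq0} read on the open submanifold $\hM_\lambda\subset\AdS_{d+3}(\sqrt{-2\lambda})$, since $\hxi=\delta_{Z_0}\big\vert_{\hM_\lambda}$; here one first records that $\hxi$ is genuinely tangent, as $\oQ\,Z_0\,Q=0$ by antisymmetry shows $\delta_{Z_0}$ is tangent to the quadric and hence to the open set. For the lightlike property I would use that $Z_0$ is a \emph{special null vector}, so $(Z_0)^2=0$; then $\hrg_\lambda(\hxi,\hxi)=G(Z_0Q,Z_0Q)=-\oQ\,(Z_0)^2Q=0$, and likewise $\htheta(\hxi)=-\oQ\,(Z_0)^2Q=0$, so $\hxi$ is already $\hrg_\lambda$-null. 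For the Killing property I would argue that $\hrg_\lambda$ is induced from the $\rO(d+2,2)$-invariant form $G$, whence every element of $\ro(d+2,2)$ generates a Killing field of $(\AdS_{d+3}(\sqrt{-2\lambda}),\hrg_\lambda)$; the flow $\exp(tZ_0)$ of $\hxi$ moreover lies in $\Sch(d+1,1)=\{A\,\vert\,AZ_0=Z_0A\}$ by (\ref{DefSch}), so it preserves the homogeneous space $\hM_\lambda$, and $\hxi$ is therefore Killing for $(\hM_\lambda,\hrg_\lambda)$.

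With $\hxi$ now known to be nowhere vanishing, lightlike and Killing for $\hrg_\lambda$, and with the identity $\htheta=\hrg_\lambda(\hxi)$ in hand, Proposition \ref{gmuPro} transfers all three properties verbatim to $\hrg_{\lambda,\mu}$, which completes the argument. I expect the only genuinely delicate bookkeeping to be the identity $\htheta=\hrg_\lambda(\hxi)$ (which is what licenses the reuse of Proposition \ref{gmuPro}) and the verification that the flow of $\hxi$ stays inside $\hM_\lambda$ rather than merely inside the ambient quadric; both of the substantive algebraic inputs, namely $(Z_0)^2=0$ and the $G$-antisymmetry of $Z_0$, are already at our disposal, so no further computation of curvature or of the explicit spectrum is needed beyond what Proposition \ref{gmuPro} already provides.
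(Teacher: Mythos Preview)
Your proof is correct and follows essentially the same line as the paper's. Two small organizational differences are worth noting. For the nowhere-vanishing step you invoke Lemma~\ref{Lem-deltaZ0neq0} on the open subset $\hM_\lambda\subset\AdS_{d+3}(\sqrt{-2\lambda})$, which is the economical choice; the paper instead redoes this check directly on $\hM_\lambda$ by writing $Z_0Q=Z_0X$ (since $Z_0Y=0$) and arguing from~(\ref{1=axi*C+be}) that $Z_0X\neq0$. For the passage from $\hrg_\lambda$ to $\hrg_{\lambda,\mu}$ you first record the identity $\htheta=\hrg_\lambda(\hxi)$ and then invoke Proposition~\ref{gmuPro}, whereas the paper works with $\hrg_{\lambda,\mu}$ directly, citing Theorem~\ref{mainThm0} for the Killing property and computing nullity in one stroke from $Z_0+\overline{Z_0}=Z_0^2=0$. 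The substantive ingredients---tangency via the $G$-antisymmetry of $Z_0$, the Killing property via $\exp(sZ_0)\in\Sch(d+1,1)$, and nullity via $(Z_0)^2=0$---are identical in both arguments.
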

\begin{proof}
The restriction $\hxi$ to $\hM_\lambda$ of the vector field $\delta_{Z_0}:Q\mapsto{}Z_0Q$ of $\bbR^{d+2,2}$ is tangent to~$\hM_\lambda$ at the point $Q$ since $\delta_{Z_0}(\oQ{}Q)=2\oQ\delta_{Z_0}Q=2\oQ{}Z_0Q=0$ as a consequence of the $G$-skewsymmetry of~$Z_0$. Let us furthermore show that $Z_0Q\neq0$ for all $Q\in\hM_\lambda$. Resorting to (\ref{hatM}), we find $Z_0Q=Z_0X$; using (\ref{Z0}) and (\ref{XY}), we get
\begin{equation}
Z_0X=\pmatrix{e\xi\cr{}-\xi^*C\cr0}
\end{equation}
and claim that the latter vector nowhere vanishes since $\xi\neq0$. Indeed, suppose that~$e=0$; then Equation~(\ref{1=axi*C+be}) would necessarily yield $\xi^*C\neq0$, implying $Z_0X\neq0$, whence $\delta_{Z_0}Q\neq0$ for all $Q\in\hM_\lambda$.

The vector field (\ref{deltaZ0Q}) is actually a Killing vector field of the metric (\ref{hgGeneral}) since it generates the $1$-parameter additive group $s\mapsto\exp(s{}Z_0)=\Id+sZ_0\in\Sch(d+1,1)$, i.e., a group of isometries of $(\hM_\lambda,\hrg_{\lambda,\mu})$ as a consequence of Theorem \ref{mainThm0}.

We finally check that $\rg_{\lambda,\mu}(\hxi,\hxi)=0$. By Equations (\ref{hg}) and~(\ref{htheta}), we get $\rg_{\lambda,\mu}(\delta_{Z_0}Q,\delta_{Z_0}Q)=\overline{Z_0Q}\,Z_0Q-\mu(\oQ Z_0^2Q)^2=0$ since $Z_0+\overline{Z_0}=Z_0^2=0$.
\end{proof}

\subsection{Conformal infinity and conformal Bargmann structures}
\label{confInfBargSection}

Resorting to Definition (\ref{hatM}), we will consider the limit $\lambda\to0$ as a route to conformal infinity of $(\hM_\lambda,\hrg_{\lambda,\mu},\hxi)$, our candidate to the status of Schr\"odinger manifold.

Observe that, in view of Lemma \ref{Lem-deltaZ0neq0}, there holds $Z_0X\neq0$ in (\ref{hatM}). So, the limiting manifold $\hM_0=\lim_{\lambda\to0}{\hM_\lambda}$ is an open submanifold of the null cone $\cQ$. The construction (\ref{Einsn-1,1}) of the Einstein space therefore prompts the following definition for conformal infinity of the previous structure, namely $M=\hM_0\,/\,\bbR^*$, i.e.,
\begin{equation}
\medbox{
M
=
\{X\in\bbR^{d+2,2}\,\vert\,\oX{}X=0,Z_0X\neq0\}\,/\,\bbR^*
}
\label{M}
\end{equation}
where $X\sim{}X'$ iff $X'=\alpha{}X$ for some $\alpha\in\bbR^*$.

\goodbreak

\begin{pro}\label{ProConfBargHomSpace}
The manifold (\ref{M}) is diffeomorphic to the following $(d+2)$-dim\-ensional homo\-geneous space of the Schr\"odinger group
\begin{equation}
\medbox{
M\cong\Sch(d+1,1)/(\rE(d)\times T\bbR^*)
}
\label{MBis}
\end{equation}
and has topology
\begin{equation}
\medbox{
M\cong(\bbR^{d+1}\times{}S^1)/\bbZ_2.
}
\label{topM}
\end{equation}
\end{pro}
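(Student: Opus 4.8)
The plan is to reproduce, for the projectivized cone, the argument of Proposition~\ref{ProHomSpace}. Write $\cQ_{\mathrm{slit}}=\hM_0=\{X\in\bbR^{d+2,2}\,\vert\,\oX X=0,\ Z_0X\neq0\}$, so that $M=\cQ_{\mathrm{slit}}/\bbR^*$ by (\ref{M}). First I would check homogeneity. The group $\Sch(d+1,1)\subset\rO(d+2,2)$ acts linearly and preserves $\oX X$, hence the null cone $\cQ$; since every $A\in\Sch(d+1,1)$ commutes with $Z_0$ (Proposition~\ref{DefSchStabPro}) and is invertible, the open condition $Z_0X\neq0$ is preserved as well, so the action restricts to $\cQ_{\mathrm{slit}}$ and, being linear, commutes with the scaling $X\mapsto\alpha X$; it therefore descends to $M$. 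The map $\pi:\Sch(d+1,1)\to M$, $\pi(A)=[A_{d+4}]$, is well defined because $A_{d+4}$ is null by (\ref{RelXY}) and $Z_0A_{d+4}\neq0$ by the computation in the proof of Proposition~\ref{ProhxiKilling}. Transitivity, hence surjectivity of $\pi$, follows as in Proposition~\ref{ProHomSpace} from the action law (\ref{AXAY}): on the chart $e\neq0$ one has the coordinate $x\in\bbR^{d+2}$ of (\ref{XYbis}), and (\ref{x'}) shows that from the origin the Bargmann translations $C$ already sweep out the whole chart, while an inversion $a\neq0$ carries the chart onto the residual locus $e=0$.

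Next I would compute the isotropy subgroup $K$ of the base point $X_0=e_{d+4}$ (i.e.\ $x=0$, $r=1$ in (\ref{XYbis})), which lies in $\cQ_{\mathrm{slit}}$ since it is null and $Z_0X_0=\xi\neq0$. The projective fixing condition $AX_0=e\,X_0$ amounts to $C=0$ and $d=0$ in (\ref{A}); inserting this into the constraints (\ref{Lxi-exi=0})--(\ref{C2+2de=0}) forces $B=0$, $b=1/e$ and $L^*L=\bone$ with $L\xi=e\xi$. Solving the latter presents $L$ in terms of a rotation $R\in\rO(d)$, a Galilei boost $u\in\bbR^d$ and the eigenvalue $e\in\bbR^*$, leaving the special-conformal parameter $a\in\bbR$ free. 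Thus $K$ is parametrized by $(R,u,e,a)$ with $\dim K=\half d(d+1)+2$; one identifies the $(R,u)$ block with the Euclidean group $\rE(d)=\rO(d)\ltimes\bbR^d$ and the remaining $(e,a)$ block --- dilations together with inversions --- with the two-dimensional group written $T\bbR^*$ in (\ref{MBis}). The count $\half(d^2+3d+8)-(\half d(d+1)+2)=d+2$ then gives $\dim M=d+2$, as stated.

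For the topology I would pass to the frame of (\ref{G'}) in which $Z_0$ is $Z_0'$ of (\ref{Z0'}). Writing $X=(x,y,z)$ with $x\in\bbR^d$ and $y,z\in\bbR^2$, the null condition reads $|x|^2+y_1^2+z_1^2=y_2^2+z_2^2=:\rho^2$, positive off $X=0$, while a short computation with (\ref{Z0'}) gives $Z_0X=0\iff y_1=-y_2,\ z_1=z_2$, which on the cone forces $x=0$. Quotienting $\cQ_{\mathrm{slit}}$ by the positive dilations $\bbR^*_+$ (normalizing $\rho=1$) identifies $(x,y_1,z_1)\in S^{d+1}$ and $(y_2,z_2)\in S^1$, so that $\cQ_{\mathrm{slit}}/\bbR^*_+\cong(S^{d+1}\times S^1)\setminus\gamma$, where $\gamma$ is the circle $\{x=0,\ y_1=-y_2,\ z_1=z_2\}$, i.e.\ the graph of the map $c:(y_2,z_2)\mapsto(0,-y_2,z_2)$ from $S^1$ to $S^{d+1}$. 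Since $d>0$ gives $d+1\geq2$, the sphere $S^{d+1}$ is simply connected, so $c$ is null-homotopic, hence isotopic to a constant puncture; the isotopy extension theorem trivializes the resulting $\bbR^{d+1}$-bundle over $S^1$, yielding $(S^{d+1}\times S^1)\setminus\gamma\cong\bbR^{d+1}\times S^1$. Finally the residual $\bbZ_2=\bbR^*/\bbR^*_+$ acts by the product of the antipodal maps, freely and preserving $\gamma$ (indeed $c(-\theta)=-c(\theta)$), so it descends to the trivialized total space and $M\cong(\bbR^{d+1}\times S^1)/\bbZ_2$, which is (\ref{topM}); equivalently $M$ is the open set $\{Z_0X\neq0\}$ inside $\Ein_{d+1,1}=\bbP\cQ$.

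The step I expect to be the crux is the topological one: the triviality of the $\bbR^{d+1}$-bundle over $S^1$ obtained by deleting the \emph{moving} puncture $\gamma$, together with the requirement that the chosen trivialization be compatible with the free $\bbZ_2$-action so that the quotient descends cleanly to (\ref{topM}). This is precisely where the standing hypothesis $d>0$ (hence $\pi_1(S^{d+1})=0$) is indispensable, and for $d=0$ the conclusion would genuinely differ. A secondary point requiring care is the isotropy computation: because dilations do not commute with the Galilei boosts, the subgroup $K$ is built as a semidirect rather than a direct product, so the factorization displayed in (\ref{MBis}) must be read as structural shorthand, and one should check en route that the isotropy subalgebra closes inside $\sch(d+1,1)$ --- that is, that the translational generators never reappear through the brackets.
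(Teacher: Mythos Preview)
Your approach mirrors the paper's: the same base point $X_0=e_{d+4}$, the same reduction of the constraints (\ref{Lxi-exi=0})--(\ref{C2+2de=0}) to $C=d=B=0$, $b=1/e$, $L^*L=\bone$ with $L\xi=e\xi$, and the same passage to the frame~(\ref{G'}) for the topological step. On the topology you are in fact more explicit than the paper, which simply records that the forbidden locus in $\cQ$ is $(\{\mathsf{pt}\}\times S^1)\times\bbR^*_+$ and immediately concludes $M\cong\bigl((S^{d+1}\setminus\{\mathsf{pt}\})\times S^1\bigr)/\bbZ_2$; your observation that the forbidden circle is \emph{a priori} only a graph $\gamma$ over $S^1$, and that one needs $d>0$ (hence $\pi_1(S^{d+1})=0$) together with isotopy extension to straighten it to a constant section, is precisely the content the paper leaves implicit. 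Your caveat about the $\bbZ_2$-equivariance of that trivialisation is likewise the right thing to flag.

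Your remark on the isotropy factorisation is also well taken. Multiplying two matrices of the form (\ref{ABis})--(\ref{LBis}) gives the group law $v_3=R_2^{-1}v_1+e_1v_2$, so the boost parameter is genuinely twisted by the dilation and the stabiliser is a semidirect extension of $\rO(d)\times(\bbR^*\ltimes\bbR)$ by $\bbR^d$, not the literal direct product $\rE(d)\times(\bbR^*\ltimes\bbR)$ written in the paper. This is cosmetic --- the dimension count and the identification of $M$ as a $\Sch(d+1,1)$-homogeneous space are unaffected --- so your argument goes through as outlined.
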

\begin{proof}
If $X\in\hM_0$, the same is true for $AX$ for any $A\in\Sch(d+1,1)$ since $Z_0AX=AZ_0X\neq0$, see Definition (\ref{DefSch}). This enables us to choose, e.g.,
\begin{equation}
X
=
\pmatrix{
0\cr
0\cr
1
}\in\hM_0
\label{localX}
\end{equation}
in the frame whose Gram matrix is as in (\ref{GramG}).

Now, $M$ being the projectivization of $\hM_0$, let us determine the stabilizer, $S$, of the direction of $X$ in (\ref{localX}). Seek thus the form of those $A\in\Sch(d+1,1)$ such that $AX=\alpha{}X$, for some $\alpha\in\bbR^*$. Using (\ref{A}), we get $C=0$, $d=0$, and $e=\alpha$.
Equations (\ref{L*C-adxi+eB=0}) and (\ref{1=axi*C+be}) entail $B=0$, and $b=1/e$. From Equation (\ref{L*L=dots}) we get $L^*L=\bone$, hence
\begin{equation}
A=\pmatrix{
L&a\xi&0\cr
0&e^{-1}&0\cr
-a\xi^*&0&e}
\label{ABis}
\end{equation}
with $L\in\rO(d+1,1)$ satisfying the constraint (\ref{Lxi-exi=0}), $a\in\bbR$, and $e\in\bbR^*$. 

In order to implement the latter constraint $L\xi=e\xi$, and fully characterize $A\in{}S$, let us choose the constant $\rg$-null vector $\xi$ to be of the form
\begin{equation}
\xi
=
\pmatrix{
0\cr
0\cr
1
}\in\bbR^{d+1,1}
\label{localxi}
\end{equation}
as in the coordinate system used in (\ref{FlatBargmannStructure}). This entails that 
\begin{equation}
L=\pmatrix{
R&\displaystyle-e^{-1}Rv&0\cr
0&e^{-1}&0\cr
v^t&-\frac{1}{2}e^{-1}v^tv&e
}
\label{LBis}
\end{equation}
with $R\in\rO(d)$, and $v\in\bbR^d$. The matrix group law for this stabilizer readily yields
$S=(\rO(d)\ltimes\bbR^d)\times(\bbR^*\ltimes\bbR)$
proving (\ref{MBis}). We can therefore confirm that $\dim(M)=\half(d^2+3d+8)-\half(d^2+d+4)=d+2$.

We now work out the topology of $M$, our $\Sch(d+1,1)$-homogeneous space~(\ref{M}). To that end, use a frame with Gram matrix (\ref{G'}) where the distinguished element $Z_0\in\ro(d+2,2)$ is represented by the matrix $Z'_0$ given by (\ref{Z0'}).
Then look for all $X\in\cQ\setminus\widehat{M}_0$, i.e., for those $X$ lying in the null cone $\cQ$, and outside $\widehat{M}_0$. This amounts to finding all solutions
$$
X=\pmatrix{
x\cr
a\cr
u\cr
b\cr
v
}\in\bbR^{d+2,2}\setminus\!\{0\}
$$
with $x\in\bbR^d$, and $a,b,u,v\in\bbR$ of both equation $\oX{}X=0$, viz.,
$x^tx+a^2+b^2=u^2+v^2$, 
and $Z_0X=0$, i.e., $u=-a$, and $v=b$.
Since $X\neq0$, we get $x^tx=0$. This leaves us with $x=0$, and
$a^2+b^2>0$; hence
$
\cQ\setminus\widehat{M}_0
\cong
(\{\mathsf{pt}\}\times{}S^1)\times\bbR^*_+,
$
which reveals that, in this forbidden domain, the fibre above $(a,b)\neq0$ is a point,~$\{\mathsf{pt}\}$. Thanks to  (\ref{Ein}), and~(\ref{M}), we  obtain $M\cong\left((S^{d+1}\setminus\{\mathsf{pt}\})\times{}S^1\right)/\bbZ_2$, i.e.,
$M\cong(\bbR^{d+1}\times{}S^1)/\bbZ_2$.
\end{proof}

\begin{rmk}
{\rm
As a consequence of (\ref{topM}), the manifold (\ref{M}) has the topology of a M\"obius band as shown in \cite{Duv,DH2}. It will be interpreted as an extended spacetime, fibered above the time axis $T\cong\bbP^1(\bbR)$; see Section~\ref{ConfBargSection}.
}
\end{rmk}

Let us show that $M$ is actually endowed with a conformal Bargmann structure (see Section \ref{ConfBargStrSubSection}) inherited from its very definition (\ref{M}).

Consider then $\hrg_\lambda=\hrg_{\lambda,0}$ where $\hrg_{\lambda,\mu}$ is as in (\ref{hgGeneral}). The induced twice symmetric covariant tensor field $\rg_0=\hrg_0\big\vert_{T\hM_0}$ 
on $\hM_0\subset\cQ$ is degenerate, and $\ker(\rg_0)$ is spanned by $\cE$, the restriction to $\hM_0$ of the Euler vector field of the quadric $\cQ$.
We find that $\sfL_\cE\,\rg_0=2\rg_0$, which entails that $\rg_0$ defines but a conformal class $[\rg]$ of Lorentz metrics on~$M=\bbP\hM_0$ (just as in the $\Ein_{d+1,1}$ case dealt with in Section \ref{sectionFG}).

We have thus proved the following result.

\begin{pro}
The quadratic form $\rg_0$ on $\hM_0$ defines a conformal class $[\rg]$ of Lorentz metrics on $M$.
\end{pro}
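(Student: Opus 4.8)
The plan is to verify the three properties announced just above the statement---degeneracy of $\rg_0$ with one-dimensional kernel $\bbR\,\cE$, the homogeneity $\sfL_\cE\rg_0=2\rg_0$, and the Lorentz signature of the transverse form---and then to assemble them into the descent to the projective quotient $M=\bbP\hM_0$.

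First I would check that the Euler vector field $\cE$, given at $Q$ by $\cE(Q)=Q$, is tangent to the null cone $\cQ$ and hence to the open submanifold $\hM_0$. Differentiating the defining relation $\oQ Q=0$ gives $T_Q\cQ=\{\delta Q\,\vert\,\oQ\,\delta Q=0\}$, the $G$-orthogonal complement $Q^\perp$; since $\oQ Q=0$ on $\cQ$, the vector $Q$ itself lies in $Q^\perp$, and $\cE(Q)\neq0$ because $\hM_0\subset\cQ\setminus\{0\}$. From the expression (\ref{hg}) for the induced form one then reads off $\rg_0(\cE,\delta Q)=G(Q,\delta Q)=\oQ\,\delta Q=0$ for every $\delta Q\in T_Q\cQ$, so $\cE$ belongs to the radical of $\rg_0$.

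Next I would show that $\cE$ spans this radical and determine the transverse signature. Because $G$ is non-degenerate of signature $(d+2,2)$ and $Q$ is null, the radical of $G\vert_{Q^\perp}$ is $Q^\perp\cap(Q^\perp)^\perp=Q^\perp\cap\bbR Q=\bbR Q$, whence $\ker(\rg_0)=\bbR\,\cE$ is exactly one-dimensional. Splitting $\bbR^{d+2,2}$ as a hyperbolic plane containing $Q$ together with its $G$-orthogonal complement of signature $(d+1,1)$ shows in the same stroke that the form induced by $\rg_0$ on the $(d+2)$-dimensional quotient $T_Q\cQ/\bbR\,\cE$ is non-degenerate of Lorentz signature $(d+1,1)$. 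The homogeneity $\sfL_\cE\rg_0=2\rg_0$ I would obtain at once from $\sfL_\cE G=2G$, valid because $G$ is a constant bilinear form whose dilation flow $Q\mapsto e^sQ$ scales it by $e^{2s}$; restriction to $T\hM_0$ preserves this weight.

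Finally I would assemble these facts. The projection $\hM_0\to M=\bbP\hM_0$ is precisely the quotient by the $\bbR^*$-action generated by $\cE$, so its fibres are the integral curves of $\cE$; since $\ker(\rg_0)=\bbR\,\cE$ coincides with this vertical distribution, $\rg_0$ pushes forward along any local section to a non-degenerate Lorentzian metric on $M$. Because $\rg_0$ carries conformal weight $2$ under dilations, two local sections differing by a rescaling $X\mapsto\alpha X$ with $\alpha$ nowhere vanishing produce metrics differing by the positive factor $\alpha^2$, so the push-forward is well defined only up to a positive conformal factor: it yields a genuine conformal class $[\rg]$ of Lorentz metrics on $M$, exactly as in the $\Ein_{d+1,1}$ case of Section \ref{sectionFG}. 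The only step demanding real care is the rank computation identifying $\ker(\rg_0)$ as one-dimensional, for it is this that guarantees the descended class is Lorentzian rather than merely degenerate; but it follows directly from the non-degeneracy and signature of $G$ together with the nullity of $Q$.
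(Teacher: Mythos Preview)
Your proposal is correct and follows exactly the approach the paper sketches in the paragraph immediately preceding the proposition: identify $\ker(\rg_0)=\bbR\,\cE$, compute $\sfL_\cE\,\rg_0=2\rg_0$, and invoke the $\Ein_{d+1,1}$ descent of Section~\ref{sectionFG}. You have simply supplied the linear-algebra details (the rank and signature computation on $Q^\perp/\bbR Q$) that the paper leaves implicit.
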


Let us derive, at this stage, a remarkable global representative of $[\rg]$ constructed via a nowhere vanishing function $F_0$ of $\hM_0$, which is homogeneous of degree $2$, e.g., via the function
\begin{equation}
F_0(X)=(\oX{}P_0)^2+(\oX{}Q_0)^2
\label{F}
\end{equation}
associated with the distinguished element $Z_0=P_0\wedge{}Q_0\in\ro(d+2,2)$ of Lemma \ref{lemmaZ0}. Indeed $F_0(X)\neq0$ is equivalent to the defining condition $Z_0X\neq0$ of $\hM_0$.

We will also denote by $\pi:\hM_0\to{}M$ the projection where $\pi(X)=[X]$ is the ray through~$X\in\hM_0$.

\begin{lem}\label{gY0in[g]}
A representative $\rg_{F_0}\in[\rg]$ associated with the choice (\ref{F}) reads as
\begin{equation}
\rg_{F_0}(\delta[X],\delta'[X])=\frac{\delta\oX\,\delta'X}{F_0(X)}.
\label{gY0}
\end{equation}
\end{lem}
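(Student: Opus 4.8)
The plan is to exhibit the right-hand side of (\ref{gY0}) as the quotient of the degenerate tensor $\rg_0$ by the function $F_0$, and to show that this quotient is invariant under the $\bbR^*$-action defining $M=\bbP\hM_0$, hence descends to a well-defined symmetric tensor on $M$ lying in the conformal class $[\rg]$. Recall from (\ref{hg}), taken in the limit $\lambda\to0$ and restricted to $\hM_0\subset\cQ$, that $\rg_0(\delta X,\delta'X)=\delta\oX\,\delta'X$ for tangent vectors to $\hM_0$; so the claimed formula is literally $\rg_{F_0}=\rg_0/F_0$ read off lifts through $\pi$.

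First I would verify that the expression is independent of the chosen lift of a tangent vector $\delta[X]\in T_{[X]}M$. Two lifts differ by a multiple of the Euler field $\cE_X=X$, which spans $\ker(d\pi)$; since $\hM_0$ lies in the null cone $\cQ=\{\oX X=0\}$, every $\delta'X\in T_X\cQ$ satisfies $\oX\,\delta'X=0$, so replacing $\delta X$ by $\delta X+tX$ changes the numerator by $t\,\oX\,\delta'X=0$ (and symmetrically in $\delta'X$). Next I would check invariance under the replacement $X\mapsto\alpha X$, $\alpha\in\bbR^*$: the function $F_0$ of (\ref{F}) is homogeneous of degree $2$ by inspection, while the relation $\sfL_\cE\,\rg_0=2\rg_0$ recorded above shows that $\rg_0$ is homogeneous of degree $2$ as well; hence the ratio $\rg_0/F_0$ is homogeneous of degree $0$, i.e. scale-invariant, and therefore descends to $\hM_0/\bbR^*=M$. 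These two independences together establish that (\ref{gY0}) is a bona fide symmetric covariant $2$-tensor on $M$.

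It then remains to see that $\rg_{F_0}$ is a Lorentz metric in the class $[\rg]$. Since $\ker(\rg_0)=\bbR\,\cE$ coincides with the vertical space $\ker(d\pi)$, the form induced on the quotient $T_{[X]}M\cong T_X\hM_0/\bbR\,\cE$ is non-degenerate, and as $F_0>0$ on $\hM_0$ the signature is unchanged, so $\rg_{F_0}$ is Lorentzian. Finally, for any local section $s\colon M\to\hM_0$ of $\pi$ one has $\rg_{F_0}=s^*\rg_0/(F_0\circ s)$, which differs from the representative $s^*\rg_0\in[\rg]$ by the strictly positive factor $1/(F_0\circ s)$; hence $\rg_{F_0}\in[\rg]$. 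Equivalently, a second admissible choice $F_0'$ yields $\rg_{F_0'}=(F_0/F_0')\,\rg_{F_0}$, with $F_0/F_0'$ a nowhere vanishing degree-$0$, hence $M$-defined, conformal factor.

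I expect the only genuinely delicate point to be the descent to the quotient: one must use \emph{both} that the lift ambiguity is annihilated by the null-cone condition $\oX\,\delta'X=0$ \emph{and} that numerator and denominator carry the same scaling weight $2$. Everything else --- non-degeneracy and membership in $[\rg]$ --- is then immediate from $\ker(\rg_0)=\bbR\,\cE$ and the positivity of $F_0$.
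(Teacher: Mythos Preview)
Your proof is correct and follows essentially the same route as the paper's: both hinge on the observation that $\rg_0$ and $F_0$ are each homogeneous of degree $2$, so that $\rg_0/F_0$ is dilation-invariant and descends to $M=\bbP\hM_0$, with the null-cone condition $\oX X=0$ disposing of the lift ambiguity along the Euler direction. The paper's proof is considerably terser --- it normalizes via $[X]=X/\sqrt{F_0(X)}$ and invokes $\oX X=0$ in a single sentence --- whereas you spell out separately the independence from the vertical lift, the scale invariance, the non-degeneracy on the quotient, and membership in $[\rg]$; these elaborations are all sound and make explicit what the paper leaves implicit.
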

\begin{proof}
Clearly, the quadratic form $\rg_0/F_0$ is dilation-invariant, and hence passes to the quotient $M$ as a representative $\rg_{F_0}\in[\rg]$. Putting $[X]=X/\sqrt{F_0(X)}$ for $X\in\hM_0$, and using the fact that $\oX{}X=0$, 
we end up with Equation (\ref{gY0}).
\end{proof}

Moreover the action of the Schr\"odinger group on $M$, given by $A:[X]\mapsto[AX]$ for all $A\in\Sch(d+1,1)$ is well-defined; we further check, via Equation (\ref{gY0}), that it is indeed a conformal action since it preserves $[\rg]$.

Considering then the $1$-form $\htheta_0$ induced by $\htheta$ on $\hM_0$, we find that $\htheta_0(\cE)=0$, and $\sfL_\cE\,\htheta_0=2\htheta_0$. This implies, with the above choice, that the dilation-invariant $1$-form $\htheta_0/F_0$ descends to $M$ as the $1$-form $\theta_{F_0}$ given by
\begin{equation}
\theta_{F_0}(\delta[X])=-\frac{\oX{}Z_0\delta{X}}{F_0(X)}.
\label{thetaGlobal}
\end{equation}
Let us then prove that $\theta_{F_0}$ is closed. As a first step, we obtain
$d\theta_{F_0}(\delta[X],\delta'[X])=\left(-2F_0(X)\,\delta\oX{}Z_0\delta'{X}+\delta{F_0(X)}\oX{}Z_0\delta'{X}-\delta'{F_0(X)}\oX{}Z_0\delta{X}\right)/(F_0(X))^2$. Then, using the fact that $Z_0=P_0\wedge{}Q_0$ (see Lemma \ref{lemmaZ0}), and the expression (\ref{F}), one finds
\begin{equation}
d\theta_{F_0}=0. 
\label{dtheta=0}
\end{equation}
We claim that $\theta_{F_0}\in[\theta]$, where $\theta$ is the Bargmann clock introduced in Section \ref{ConfBargSection}.

\begin{pro}\label{Proxi}
The vector field $\delta_{Z_0}:X\mapsto{}Z_0X$ of $\hM_0$ descends to the quotient~$M$ defined in (\ref{M}) as a nowhere vanishing, null, vector field $\xi$, viz.,
\begin{equation}
\xi_{[X]}=D\pi(X)Z_0X.
\label{xi=pideltaZ0}
\end{equation}
\end{pro}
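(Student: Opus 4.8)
The plan is to establish the three asserted properties of $\xi$ --- that it is well defined on $M$, nowhere vanishing, and lightlike --- each of which reduces to the two structural facts already at our disposal: that $Z_0$ is $G$-skew and nilpotent ($\overline{Z_0}=-Z_0$, $Z_0^2=0$), and that $Z_0X\neq0$ everywhere on $\hM_0$ by the very definition (\ref{M}). First I would check that the right-hand side of (\ref{xi=pideltaZ0}) does not depend on the choice of representative $X$ in its ray. Writing $m_\alpha$ for multiplication by $\alpha\in\bbR^*$, the identity $\pi\circ m_\alpha=\pi$ differentiated at $X$ yields $D\pi(\alpha X)=\alpha^{-1}D\pi(X)$; since $Z_0(\alpha X)=\alpha\,Z_0X$, the two homogeneity factors cancel and $D\pi(\alpha X)\,Z_0(\alpha X)=D\pi(X)\,Z_0X$ for all $\alpha\in\bbR^*$. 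Equivalently, the linear field $\delta_{Z_0}$ commutes with the Euler field $\cE$ (their flows $e^{sZ_0}$ and $e^{t}$ commute), so it is projectable through $\pi:\hM_0\to M$ and (\ref{xi=pideltaZ0}) defines a smooth vector field $\xi$ on $M$.

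Next, for the non-vanishing I would suppose $\xi_{[X]}=0$. This forces $Z_0X\in\ker D\pi(X)=\bbR\cdot X$, the radial (Euler) direction, so $Z_0X=cX$ for some scalar $c$. Applying $Z_0$ once more and using $Z_0^2=0$ gives $c^2X=0$, hence $c=0$ (as $X\neq0$) and then $Z_0X=0$ --- contradicting the defining inequality $Z_0X\neq0$ of $\hM_0$. Therefore $\xi_{[X]}\neq0$ at every point of $M$.

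Finally, since nullity is a conformally invariant notion I would test $\xi$ against the representative $\rg_{F_0}\in[\rg]$ of Lemma \ref{gY0in[g]}. The vector $\xi_{[X]}$ lifts to $\delta X=Z_0X$ at $X$, so formula (\ref{gY0}) gives $\rg_{F_0}(\xi_{[X]},\xi_{[X]})=\overline{Z_0X}\,Z_0X/F_0(X)$, and $\overline{Z_0X}\,Z_0X=\oX\,\overline{Z_0}\,Z_0X=-\oX\,Z_0^2\,X=0$ by skew-symmetry and nilpotency --- the very computation that closes the proof of Proposition \ref{ProhxiKilling}. As $F_0(X)\neq0$ on $\hM_0$, this shows $\rg_{F_0}(\xi,\xi)=0$, i.e.\ $\xi$ is lightlike for $[\rg]$.

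The one step deserving genuine care is the first: all the substance lies in recognizing that the homogeneity degree $-1$ of $D\pi$ and the degree $+1$ of $X\mapsto Z_0X$ cancel, so that $\delta_{Z_0}$ descends to the $\bbR^*$-quotient. Once projectability is secured, both remaining properties collapse immediately onto the two identities $Z_0^2=0$ and $Z_0X\neq0$, so I do not anticipate any real obstacle beyond bookkeeping.
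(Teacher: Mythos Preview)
Your proof is correct and follows essentially the same route as the paper: projectability via dilation-invariance of $\delta_{Z_0}$, and nullity via the computation $\overline{Z_0X}\,Z_0X=-\oX Z_0^2X=0$ against the representative $\rg_{F_0}$. Your non-vanishing argument is in fact more careful than the paper's: the paper simply asserts that the push-forward of a nowhere-zero projectable field is nowhere-zero, whereas you explicitly rule out the possibility that $Z_0X$ lies in the radial direction $\ker D\pi(X)=\bbR X$ by invoking $Z_0^2=0$ --- a genuine point the paper leaves implicit.
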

\begin{proof}
The derivation $\delta_{Z_0}$ preserves the constraint $\oX{}X=0$; it thus defines a vector field of $\hM_0$ which is nowhere zero because of the definition (\ref{M}). we readily check that $\delta_{Z_0}$ is invariant against dilations $X\mapsto\alpha{}X$ with $\alpha\in\bbR^*$.
The push-forward, $\xi$, of $\delta_{Z_0}$ to $M=\bbP\hM_0$ is therefore a nowhere vanishing vector field. Finally, 
(\ref{gY0}) yields $\rg_{F_0}(\xi_{[X]},\xi_{[X]})=\overline{\delta_{Z_0}X}\,\delta_{Z_0}X/F_0(X)=-\oX{}Z_0^2X/F_0(X)=0$.
\end{proof}

Let us end by proving that the vector field $\xi$ is indeed covariantly constant with respect to the Levi-Civita connection, $\nabla$, of $\rg_{F_0}$.

Applying the general formula $\nabla\theta=\half d\theta+\half{}\sfL_\xi\,\rg$, where $\theta=\rg(\xi)$, we readily find, using Equation (\ref{dtheta=0}), that $\nabla\theta_{F_0}=\half{}\sfL_\xi\,\rg_{F_0}$. Now, Equation (\ref{gY0}) helps us compute $\sfL_\xi\,\rg_{F_0}(\delta'[X],\delta''[X])=\delta_{Z_0}\left(\delta'\oX\,\delta''X/F_0(X)\right)-\overline{[\delta_{Z_0},\delta']X}\,\delta''X/F_0(X)-\delta'\oX\,[\delta_{Z_0},\delta'']X/F_0(X)=-(\delta'\oX\,\delta''X)\delta_{Z_0}F_0(X)/(F_0(X))^2=0$ since $\delta_{Z_0}F_0(X)=0$ in view of $Z_0P_0=Z_0Q_0=0$. We thus get $\nabla\theta_{F_0}=0$, hence $\nabla\xi=0$.

We have thereby proved the following proposition.

\begin{pro}\label{confBargStructure}
The triple $(M,[\rg],\xi)$ is a conformal Bargmann structure in the sense of Definition \ref{ConfBargStrDef}.
\end{pro}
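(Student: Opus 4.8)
The plan is to verify the three defining conditions of Definition \ref{ConfBargStrDef}: that $[\rg]$ is a genuine conformal class of Lorentz metrics on $M$, that $\xi$ is a well-defined nowhere vanishing lightlike vector field, and that $\xi$ is null \emph{and} parallel with respect to the Levi-Civita connection of every (equivalently, one) representative in $[\rg]$. Almost all the hard analytic work has already been carried out in the preceding propositions and lemmas, so the proof amounts to collecting these results and confirming that they assemble into the structure demanded by Definition \ref{ConfBargStrDef}.

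First I would recall that the projectivized manifold $M$ of (\ref{M}) is a smooth $(d+2)$-dimensional manifold by Proposition \ref{ProConfBargHomSpace}, so it qualifies as a candidate base. The conformal class $[\rg]$ of Lorentz metrics on $M$ is provided by the Proposition preceding Lemma \ref{gY0in[g]} (via the degenerate tensor $\rg_0$ on $\hM_0$ whose kernel is the Euler field $\cE$, together with $\sfL_\cE\,\rg_0=2\rg_0$), and a concrete global representative $\rg_{F_0}\in[\rg]$ is furnished by Lemma \ref{gY0in[g]}. This settles the conformal-class requirement. Next, Proposition \ref{Proxi} supplies the vector field $\xi$ as the push-forward of $\delta_{Z_0}:X\mapsto Z_0X$ under $\pi:\hM_0\to M$; that same proposition establishes both that $\xi$ is nowhere vanishing (using $Z_0X\neq0$ on $\hM_0$) and that it is lightlike, $\rg_{F_0}(\xi,\xi)=0$, since $Z_0^2=0$ forces $\overline{Z_0X}\,Z_0X=-\oX Z_0^2X=0$.

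The remaining point, and the one carrying the actual geometric content, is that $\xi$ be covariantly constant, $\nabla\xi=0$. This is precisely what the computation immediately preceding the statement accomplishes: using the identity $\nabla\theta=\half\,d\theta+\half\,\sfL_\xi\,\rg$ for $\theta=\rg(\xi)$, together with $d\theta_{F_0}=0$ from (\ref{dtheta=0}) and the vanishing of $\sfL_\xi\,\rg_{F_0}$ (which follows from $\delta_{Z_0}F_0=0$, itself a consequence of $Z_0P_0=Z_0Q_0=0$ in the decomposition $Z_0=P_0\wedge Q_0$ of Lemma \ref{lemmaZ0}), one obtains $\nabla\theta_{F_0}=0$ and hence $\nabla\xi=0$. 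The proof therefore reduces to the single sentence: \emph{the triple $(M,[\rg],\xi)$ satisfies all three requirements of Definition \ref{ConfBargStrDef} by Proposition \ref{Proxi}, Lemma \ref{gY0in[g]}, the conformal-class proposition, and the identity $\nabla\theta_{F_0}=0$ established above.}

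The main subtlety I would flag is not a computation but a consistency check: one must make sure the ``parallel'' condition is formulated correctly given that it is the conformal \emph{class} $[\rg]$, rather than a single metric, that is the genuine datum. This is exactly why equation (\ref{EqRel}) and the discussion in Section \ref{ConfBargStrSubSection} matter---$\nabla\xi=0$ must be compatible across the whole conformal class, and the restriction $d\Omega\wedge\theta=0$ shows that the Bargmann-equivalence is preserved precisely when the conformal factor is pulled back from the time axis $T$. Since $\theta_{F_0}$ is closed and lies in $[\theta]$, this compatibility holds, so establishing $\nabla\xi=0$ for the single representative $\rg_{F_0}$ suffices to conclude that $(M,[\rg],\xi)$ is a bona fide conformal Bargmann structure. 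I would close by simply invoking Definition \ref{ConfBargStrDef} to complete the proof.
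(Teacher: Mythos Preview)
Your proposal is correct and follows essentially the same approach as the paper: the proposition is stated as a summary of the results accumulated in Section~\ref{confInfBargSection}, and your proof simply collects Proposition~\ref{Proxi}, Lemma~\ref{gY0in[g]}, the closedness $d\theta_{F_0}=0$, and the computation $\nabla\theta_{F_0}=0$ (via $\sfL_\xi\rg_{F_0}=0$ from $\delta_{Z_0}F_0=0$) exactly as the paper does in the paragraphs immediately preceding the statement. Your additional remark on conformal-class compatibility via~(\ref{EqRel}) is a helpful clarification but not a departure from the paper's argument.
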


\subsection{Main result: homogeneous Schr\"odinger manifolds}

Consider the triple $(\hM_\lambda,\hrg_{\lambda,\mu},\hxi)$ where $\hM_\lambda\subset\AdS_{d+3}(\sqrt{-2\lambda})$ defined by (\ref{hatM}) is the Schr\"odinger-homogeneous space (\ref{Mlambda0Ter}) endowed with the metric (\ref{hgGeneral}) and the vector field (\ref{deltaZ0Q}).  Consider next the conformal Bargmann structure $(M,[\rg],\xi)$ where $M$ defined by (\ref{M}) is the Schr\"odinger homogeneous space (\ref{MBis}) endowed via (\ref{gY0}) with the conformal class $[\rg]$ of Bargmann metrics, and where $\xi$ is the fundamental vector field  (\ref{xi=pideltaZ0}) of the group generated by $Z_0$.

\begin{thm}\label{mainThm}
The triple $(\hM_\lambda,\hrg_{\lambda,\mu},\hxi)$ is the (Poincar\'e-)Schr\"o\-dinger manifold, in the sense of Definition \ref{SchManifDef}, with conformal Bargmann boundary $(M,[\rg],\xi)$ provided $\lambda=-\half$, and $\mu=1$.
\end{thm}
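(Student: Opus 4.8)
The plan is to check, for $\lambda=-\half$ and $\mu=1$, the three numbered conditions of Definition \ref{SchManifDef}, since everything else has already been prepared. The Lorentzian signature of $\hrg_{\lambda,\mu}$ is part of Theorem \ref{mainThm0}; that $\hxi$ is a nowhere vanishing, lightlike, Killing vector field is Proposition \ref{ProhxiKilling}; and the identification of the boundary $M=\partial\hM_\lambda$, reached through the degeneration $\lambda\to0$ of Section \ref{confInfBargSection}, as a conformal Bargmann structure $(M,[\rg],\xi)$ is Proposition \ref{confBargStructure}. The verification is best carried out in the Poincar\'e coordinate system of (\ref{localhgBis})--(\ref{hthetaBis}), in which $r$ serves as a defining function and $r=0$ is the conformal boundary.

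First I would substitute $-2\lambda=1$ into (\ref{localhgBis}) and $\mu=1$ into (\ref{hgGeneral}), so that $\hrg_{-\half}=r^{-2}[\rg+dr\otimes dr]$ and $\htheta=\theta/r^2$; reading off the result then shows that $\hrg_{-\half,1}$ reproduces the AdS/CFT metric (\ref{BSmetric}), the offending $-dt^2/r^4$ term being exactly $-\mu\,\htheta\otimes\htheta$. Axiom \ref{one} is then immediate: by Proposition \ref{Proxi} the generator $\hxi=\delta_{Z_0}$ descends on the boundary to $X\mapsto D\pi(X)Z_0X$, which is $\xi$. For Axiom \ref{two} I would invert $\hrg_{-\half,1}$ in the frame $(\partial_{x^i},\partial_t,\partial_s,\partial_r)$; the spatial and radial blocks give $r^2\sum_i\partial_{x^i}\otimes\partial_{x^i}+r^2\,\partial_r\otimes\partial_r$, while the $(t,s)$ block inverts to $2r^2\,\partial_t\odot\partial_s+\partial_s\otimes\partial_s$. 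Letting $r\to0$ on the boundary kills every $r^2$ term and leaves $\partial_s\otimes\partial_s=\xi\otimes\xi$, which is $\mu\,\xi\otimes\xi$ with $\mu=1$ precisely.

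For Axiom \ref{three} I would form $\rg^+=\hrg+\mu\,\htheta\otimes\htheta=r^{-2}[\rg+dr\otimes dr]$, noting that the added term is calibrated exactly to cancel the $r^{-4}$ singularity and restore the FG-admissible $r^{-2}$ profile of (\ref{g+}). This $\rg^+$ is the standard Poincar\'e form of the $\AdS_{d+3}(\sqrt{-2\lambda})$ metric, and I would verify the two clauses of Definition \ref{FG}. The conformal infinity condition holds because $r^2\,\rg^+\big\vert_{TM}=\rg$, a representative of the Bargmann conformal class supplied by Proposition \ref{confBargStructure}. The Einstein condition is met exactly, $\rg^+$ having constant sectional curvature; imposing the normalization $\Ric(\rg^+)+(d+2)\,\rg^+=0$ forces the radius $\sqrt{-2\lambda}$ to equal $1$, which is exactly why $\lambda=-\half$ is singled out among all $\lambda<0$.

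The main obstacle I anticipate is conceptual rather than computational: reconciling the two a priori distinct notions of boundary. The FG conformal infinity sits at $r=0$ for each fixed $\lambda<0$, whereas the conformal Bargmann structure of Proposition \ref{confBargStructure} arose from the degeneration $\lambda\to0$ of the hyperboloids onto the null cone $\cQ$. I would make the two coincide by checking that the $r\to0$ limit of (\ref{localQ}) lands on $\cQ$ --- indeed the limiting ray $[\hx,-\half\hx^*\hx,1]$ is $G$-null --- so that both routes produce the same projectivized $M=\bbP\hM_0$ carrying the same conformal class $[\rg]$ and the same null Killing descendant $\xi$. Once this matching is secured, the remaining steps are the routine substitution and $2\times2$ block inversion indicated above, and the distinguished values $\lambda=-\half$, $\mu=1$ fall out respectively from the Einstein normalization and from Axiom \ref{two}.
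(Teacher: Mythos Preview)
Your proposal is correct and follows essentially the same route as the paper: verify the three axioms of Definition~\ref{SchManifDef} in Poincar\'e coordinates, invoking Propositions~\ref{Proxi} and~\ref{confBargStructure} for the boundary data, inverting the metric for Axiom~\ref{two}, and using the Einstein normalization on the AdS Poincar\'e metric $\rg^+=\hrg_\lambda$ for Axiom~\ref{three}. The only cosmetic difference is that the paper introduces the rescaled defining function $\hr=r\sqrt{-2\lambda}$ (see~(\ref{hr})) and keeps $\lambda$ general throughout, so that the identity $\hr\to0\Leftrightarrow\lambda\to0$ makes the coincidence of the FG boundary with the $\lambda\to0$ degeneration manifest, and the explicit curvature computation~(\ref{EinsteinEqs}) then isolates $\lambda=-\half$; you instead fix $\lambda=-\half$ at the outset and argue the boundary matching by checking directly that the $r\to0$ ray of~(\ref{localQ}) lands on $\cQ$, which is equally valid.
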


\begin{proof} Our objective is thus to demonstrate that the preceding data fulfill all items of our Definition \ref{SchManifDef} of Schr\"odinger manifolds.

Let us first review some previous results expressed in local co\-ordinate systems adapted to the Schr\"odinger symmetry pervading our construction. To this purpose, and in order to make contact with the FG construction, we find it convenient to work now on the open domain where $Q\in\hM_\lambda$ admits the local form~(\ref{localQ}).
Owing to Equation~(\ref{hx}), write 
\begin{equation}
\left\{
\begin{array}{lcl}
\hx^i&=&x^i\quad(i=1,\ldots,d)\\[4pt]
\wht&=&t\\[4pt]
\hs&=&s+\lambda r^2q.
\end{array}
\right.
\label{hxhths}
\end{equation}
Now, positing
\begin{equation}
\hr=r\sqrt{-2\lambda}
\label{hr}
\end{equation}
we find 
\begin{equation}
Q
=
\frac{\sqrt{-2\lambda}}{\hr}
\pmatrix{
\hx\cr
-\half{}\hx^*\,\hx-\half\hr^2\cr
1
}\in\hM_\lambda
\label{localQbis}
\end{equation}
in view of (\ref{localQ}).
This implies that
\begin{equation}
X=
\frac{1}{r}\pmatrix{
x\cr
-\half{}x^*\,x\cr
1
}\in\hM_0
\label{localXbis}
\end{equation}
in the limit $\hr\to0$ corresponding to $\lambda\to0$, hence that a representative $[X]$ of the ray $\bbR^*X\in\bbP\hM_0$ is given, with $x\in\bbR^{d+1,1}$, by
\begin{equation}
[X]=
\pmatrix{
x\cr
-\half{}x^*\,x\cr
1
}\in{}M.
\label{local[X]}
\end{equation}

Collecting  the expressions of Section \ref{LorentzStrSubSec}, we assert that the metrics $\hrg_{\lambda,\mu}$ given by (\ref{localhgBis}), (\ref{hgGeneral}), and the vector field $\hxi$ as defined by (\ref{deltaZ0Q}), viz.,
\begin{equation}
\medbox{
\hrg_{\lambda,\mu}
=
\frac{-2\lambda}{\hr^2}\left[\sum_{i,j=1}^{d+2}\rg_{ij}\,d\hx^i\otimes{}d\hx^j+\,d\hr\otimes{}d\hr+2\lambda\mu\,\frac{d\wht\otimes{}d\wht}{\hr^2}\right]
\quad
\&
\quad
\hxi=\frac{\partial}{\partial\hs}
}
\label{localhgxi}
\end{equation}
constitute a family of Lorentz metrics (for $\lambda<0$), while $\hxi$ is a nowhere zero null Killing vector field.

The crux of the matter is that $\hr=r\sqrt{-2\lambda}$ defined by (\ref{hr}) is definitely (see~(\ref{g+}) and (\ref{localhgxi})) our \textit{defining function} for $M$, the conformal boundary of $\hM_\lambda$ (coordinatized as in (\ref{local[X]})).
Moreover Equations (\ref{hxhths}) and (\ref{hr}) entail that, locally, $\partial/\partial{\hs}=\partial/\partial{s}$, a relationship which is consistent with the limit $\hr\to0$. The vector field $\hxi=\partial/\partial{\hs}$ of $\hM_\lambda$ therefore goes smoothly over to $M=\partial\hM_\lambda$ as the vector field $\xi=\partial/\partial{s}$. This justifies, in local terms, item~\ref{one} of Definition \ref{SchManifDef}, the latter being globally accounted for by Proposition \ref{Proxi}.

Furthermore, straightforward computation using (\ref{localhgxi}) shows that
\begin{equation}
\hrg^{\lambda,\mu}=
\frac{-\hr^2}{2\lambda}\left[\sum_{i,j=1}^{d+2}\rg^{ij}\,\frac{\partial}{\partial{\hx^i}}\otimes\frac{\partial}{\partial{\hx^j}}+\frac{\partial}{\partial{\hr}}\otimes\frac{\partial}{\partial{\hr}}\right]
+
\mu\,\frac{\partial}{\partial{\hs}}\otimes\frac{\partial}{\partial{\hs}}.
\label{ginverse}
\end{equation}
This readily yields $\hrg^{\lambda,\mu}\big\vert_{T^*M}=\mu\,\xi\otimes\xi$ in the limit $\hr\to0$, insuring that item \ref{two} of Definition \ref{SchManifDef} holds true prior to imposing the normalization condition $\mu=1$.

Using then the form of the Poincar\'e metric $\rg^+$ of Definition~\ref{SchManifDef}, we easily deduce from (\ref{localhgxi}) that $\rg^+_{\lambda}=\hrg_{\lambda}$ as given by Equation (\ref{localhgBis}).
This proves that it is only the conformal class, $[\rg]$, of the metric $\rg=\lim_{\lambda\to0}(r^2\,\rg_\lambda^+)$ that goes over to the boundary $M=\bbP\hM_0$ of $\hM_\lambda$.
To sum up, we find that the triple $(M,[\rg],\xi)$ where
\begin{equation}
\medbox{
\rg=\sum_{i=1}^d dx^i\otimes{}dx^i+2dt\odot{}ds
\Bigg(=\sum_{i,j=1}^{d+2}\rg_{ij}\,dx^i\otimes{}dx^j\Bigg)
\quad
\&
\quad
\xi=\frac{\partial}{\partial s}
}
\label{BargHomSchr}
\end{equation}
is a representative of our Schr\"odinger-homogeneous conformal Barg\-mann structure (see (\ref{FlatBargmannStructure})) expres\-sed in the adapted local coordinate system provided by (\ref{local[X]}). 
Direct computation moreover shows that 
\begin{equation}
\Ric(\rg^+_\lambda)+(d+2)\rg^+_\lambda=\frac{(d+2)(1+2\lambda)}{2\lambda}\rg^+_\lambda
\label{EinsteinEqs}
\end{equation}
which enables us to conclude that $\rg^+_\lambda$ is indeed a Poincar\'e metric on $\hM_\lambda$, consistently with Definition \ref{FG}, if the right-hand side of Equation (\ref{EinsteinEqs}) vanishes, i.e., if $\lambda=-\half$. Item \ref{three} of the definition \ref{SchManifDef} of Schr\"odinger manifolds is therefore fulfilled. 

The proof of Theorem \ref{mainThm} is complete.
\end{proof}

We refer to Figure \ref{Fig1} for a graphical representation of our construction.


\begin{figure}[h]
\begin{center}
\begin{tikzpicture} [scale=1.055]
\draw[line width=2pt] (0,0) circle (5cm);
\node [rectangle,rounded corners,draw,name=RP] at (-5,4.5) {$\bbP^{d+3}(\bbR)$};
\draw [->,decorate,decoration={snake,amplitude=.7mm,segment length=7mm,post length=1mm}] (RP) -- (-3.7,3.5);
\filldraw[black!40,line width=1.cm] (0,0) circle (4cm);
\node[rectangle,rounded corners,draw,fill=black!40,name=AdS] at (6,4.5) {$\AdS_{d+3}(\sqrt{-2\lambda})$};
\draw [->,decorate,decoration={snake,amplitude=.7mm,segment length=7mm,post length=1mm}] (AdS) -- (3.5,2.5);
\filldraw[black!10] (0,0) circle (3.8cm);
\node[rectangle,rounded corners,draw,fill=black!10,name=hM] at (-5,-4.5) {$(\hM_\lambda,\hrg_{\lambda,\mu},\hxi)$};
\draw [->,decorate,decoration={snake,amplitude=.7mm,segment length=7mm,post length=1mm}] (hM) -- (-2.3,-2.3);
\draw[very thick,->] (0,1.06) -- (0,3.8);
\node at (.47,1.2) {\footnotesize $\hr=0$};
\node at (.2,3.5) {\footnotesize $\hr$};
\draw[line width=4pt,black!40,fill=white]  (0,0)  circle (.91cm);
\node[name=Ein] at (0,-.2) {$\Ein_{d+1,1}$};
\draw [->,decorate,decoration={snake,amplitude=.4mm,segment
  length=4mm,post length=1mm}] (-.2,0) -- (.59,.59);
\draw[line width=3pt,rotate=-90,black]  (1,.06) arc (3.5:356.5:1cm);
\node[rectangle,rounded corners,draw,name=M] at (6,-4.5) {$(M=\partial \hM_\lambda,[\rg],\xi)$};
\draw [->,decorate,decoration={snake,amplitude=.7mm,segment length=7mm,post length=1mm}] (M) -- (.75,-.75);
\node[rectangle,draw,name=pt] at (-2,-6) {$S^1\times\{\mathsf{pt}\}$};
\draw [thick,->]
(pt) -- (0,-1.1);
\begin{scope}[>=latex]
\draw[very thick,->,rotate=-25] (1.05,0) -- (1.05,1.2) node[very near end,sloped,below,rotate=-90] {$\quad\xi$};
\draw[very thick,->,rotate=-45] (-1.05,0) -- (-1.05,-1.2) node[very near end,sloped,below,rotate=90] {$\quad\xi$};
\end{scope}
\coordinate (C) at (1.6,1.6);
\begin{scope}[rotate=20]
\begin{scope}[>=latex]
\draw[very thick,->]  ($(C) + (-.5,-.5)$) -- ($(C) + (1,1)$) node[very near end,sloped,below,rotate=-45] {$\quad\hxi$};
\end{scope}
\draw  ($(C) + (.5,-.5)$) -- ($(C) + (-.5,.5)$);
\draw ($(C) +  (.5,-.5)$) -- ($(C) + (-.5,.5)$);
\shadedraw [shading angle = -90] ($(C) + (0,0.5)$) ellipse (.5cm and .1cm);
\shadedraw  ($(C) + (0,-0.5)$) ellipse (.5cm and .1cm);
\end{scope}
\coordinate (C) at (-2.3,-.3);
\begin{scope}[rotate=175]
\begin{scope}[>=latex]
\draw[very thick,->]  ($(C) + (-.5,-.5)$) -- ($(C) + (1,1)$) node[very near end,sloped,below,rotate=-45] {$\quad\hxi$};
\end{scope}
\draw  ($(C) + (.5,-.5)$) -- ($(C) + (-.5,.5)$);
\draw ($(C) +  (.5,-.5)$) -- ($(C) + (-.5,.5)$);
\shadedraw [shading angle = -90] ($(C) + (0,0.5)$) ellipse (.5cm and .1cm);
\shadedraw  ($(C) + (0,-0.5)$) ellipse (.5cm and .1cm);
\end{scope}
\end{tikzpicture}
\end{center}
\caption{Schr\"odinger manifold \& conformal Bargmann boundary}
\label{Fig1}
\end{figure}
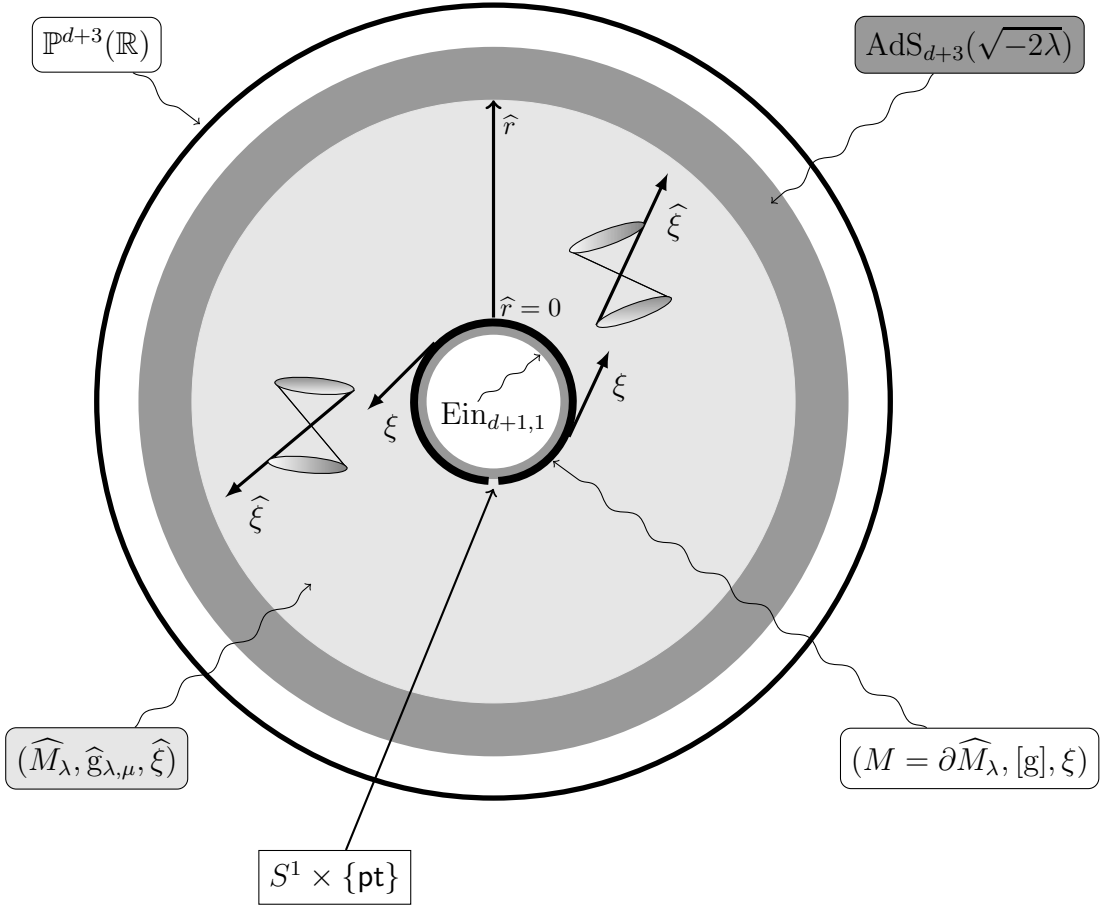

\begin{rmk}
{\rm
We duly recover the metric (\ref{BSmetric}) of the ``AdS/CFT'' correspondence from the metric given by (\ref{localhgxi}) by imposing the special values $\lambda=-\half$, and $\mu=1$ yielded by Theorem \ref{mainThm}.
}
\end{rmk}

\begin{rmk}
{\rm
It can be checked that there holds, in full generality,
\begin{equation}
\Ric(\hrg_{\lambda,\mu})-\frac{d+2}{2\lambda}\,\hrg_{\lambda,\mu}=-\mu\frac{d+4}{2\lambda}\,\htheta\otimes\htheta.
\label{TracefreeEinstein}
\end{equation}
These equations are interpreted as Einstein's equations $\widehat{\Ric}-\half\widehat{R}\,\hrg+\Lambda\,\hrg=\rT$ with a cosmological constant $\Lambda=(d+1)(d+2)/(4\lambda)$, and sources given in terms of the ``null fluid'' stress-energy-momentum tensor $\rT=-\mu(d+4)/(2\lambda)\,\htheta\otimes\htheta$. See also~\cite{DHH}.
}
\end{rmk}

\begin{rmk}
{\rm
We learn from Equation (\ref{thetadtheta=0}) that 
the distri\-bution $\ker(\htheta)$ is actual\-ly integra\-ble. This is the very condition found in \cite{JN} to achieve a null-Killing dimensional reduction. Our Schr\"odinger-homogeneous manifolds $(\hM_{\lambda},\hrg_{\lambda,\mu},\hxi)$ thus provide examples of those manifolds considered by Julia and Nicolai~\cite{JN}.
}
\end{rmk}

\section{Conclusion}\label{ConclusionSection}


This article has been triggered by the seemingly contradictory emergence of non-relativistic Schr\"odinger ``isometries'' within the framework of an \textit{a priori} relativistic AdS/CFT correspondence (in the case where the dynamical exponent is $z=2$).

A closer look at the literature referred to in the introduction made it clear that the metric (\ref{BSmetric}) appearing in the physics of non-relativistic holography should be related to the structure of what has been called a Bargmann extension of non\-relativistic spacetime; see Section \ref{ConfBargSection} which also offers a general definition of the Schr\"odinger group. This hint has been first investigated in \cite{DHH}. Our task, here, was thus to put this observation on more global geometrical grounds.

From this vantage point, we have chosen to specialize the construction of a Poincar\'e metric, due to Fefferman and Graham, to the case where conformal infinity is moreover endowed with a conformal Bargmann structure governed by a null, parallel, vector field. This has led us to our definition \ref{SchManifSection} of Schr\"odinger manifolds.

Let us insist that the general proof of the existence and uniqueness (in suitable  dimensions) of Schr\"odinger prolongations of Bargmann manifold structures has not been envisaged here, being clearly beyond the scope of this article. This will be deferred to sub\-sequent work. 

Nevertheless, the purpose of this article is to supply explicit examples of such Schr\"o\-dinger manifolds that would help us understand the origin of the above-mentioned metric, with Schr\"odinger isometries, in a non-relativistic avatar of the AdS/CFT correspondence.
Accordingly, we have found it useful to characterize, in the ``flat'' case, the Schr\"odinger group, $\Sch(d+1,1)$, as the stabilizer within $\rO(d+2,2)$ of a distinguished nilpotent element, $Z_0$, of the Lie algebra, $\ro(d+2,2)$. Our construction interestingly confers, as awaited and in a clear-cut fashion, a non-relativistic status to the Schr\"odinger group within a purely relativistic framework.

Our main result, namely Theorem \ref{mainThm}, provides us with examples of Schr\"odinger manifolds, $(\hM_\lambda,\hrg_{\lambda,\mu},\hxi)$; the canonical one is fixed by the normalization conditions $\lambda=-\half$, and $\mu=1$. Note that $\hM_\lambda$ is actually a homogeneous space of the Schr\"odinger group $\Sch(d+1,1)$, and, besides, an open submanifold of $\AdS_{d+3}(\sqrt{-2\lambda})$. In a appropriate coordinate system on $\hM_{-\half}$, the metric $\hrg_{-\half,1}$ matches exactly the Balasubramanian-McGreevy and Son metric (\ref{BSmetric}). See also \cite{SY2} for a local approach in terms of a non-reductive homogeneous space of the Schr\"odinger group. 

Let us stress that it finally appears that the Schr\"odinger group $\Sch(d+1,1)$ is, as expected, the maximal group of isometries of our Schr\"odinger manifolds. This definitely firms up the claims of \cite{Bala,Son}. 

There remains, however, to understand, in completely general terms, the relation\-ship between the Schr\"odinger group defined as the group of automorphisms of a conformal Bargmann structure and the group of automorphisms of an as\-sociated (Poincar\'e-)Schr\"odinger structure. This program for future work should indeed take advantage of a key result of Anderson \cite{And} about the isometric extensions of the automorphisms of conformal infinity of a conformally compact Einstein manifold.

From another perspective, it would be worthwhile considering our construction of Poincar\'e-Schr\"odinger metrics for the canonical circle-bundle of a CR manifold (see, e.g., \cite{DT} for a general reference on CR geometry) endowed with its Fefferman metric, and a null nowhere vanishing Killing vector field, given by the genera\-tor of the $S^1$-action \cite{Fef,Lee,Gra,BD}.

\goodbreak

We finally expect that the definition of Schr\"odinger manifolds put forward in this article, and the explicit examples that have been worked out, will foster new research in the very attractive domain of non-relativistic AdS/CFT correspondence.

\paragraph{Acknowledgements:}
{We would like to express our gratitude to P. Horv\'athy for his interest, stimulating encouragements, and also for his help during the first stage of the preparation of the manuscript. Thanks are also due to X. Bekaert for most enlight\-ening discussions, and to J.~Hartong and S.~Detournay for useful cor\-respondence. We furthermore acknowledge the referees' valuable suggestions.}



\end{document}